\newtheorem{lemma}{Lemma}
\newtheorem{corollary}{Corollary}
\newtheorem{proposition}{Proposition}
    \def\Complex{{\rm\rule[.23ex]{.03em}{1.1ex}\kern-.3em{C}}}
    \newcommand{\be}{\begin{equation}} \newcommand{\ee}{\end{equation}}
    \newcommand{\bea}{\begin{eqnarray}} \newcommand{\eea}{\end{eqnarray}}
    \newcommand{\benum}{\begin{enumerate}} \newcommand{\eenum}{\end{enumerate}}
    \newcommand{\qb}{{\bf b}}
    \newcommand{\qp}{{\bf p}}
    \newcommand{\qs}{{\bf s}}
    \newcommand{\qv}{{\bf v}}
    \newcommand{\qy}{{\bf y}}
    \newcommand{\qz}{{\bf z}}
    \newcommand{\qA}{{\bf A}}
    \newcommand{\qB}{{\bf B}}
    \newcommand{\qC}{{\bf C}}
    \newcommand{\qD}{{\bf D}}
    \newcommand{\qE}{{\bf E}}
    \newcommand{\qF}{{\bf F}}
    \newcommand{\qG}{{\bf G}}
    \newcommand{\qH}{{\bf H}}
    \newcommand{\qI}{{\bf I}}
    \newcommand{\qQ}{{\bf Q}}
    \newcommand{\qR}{{\bf R}}
    \newcommand{\qS}{{\bf S}}
    \newcommand{\qT}{{\bf T}}
    \newcommand{\qU}{{\bf U}}
    \newcommand{\qV}{{\bf V}}
    \newcommand{\qX}{{\bf X}}
    \newcommand{\qzero}{{\bf 0}}
    \newcommand{\qone}{{\bf 1}}
    \newcommand{\qPsi}{{\boldsymbol \Psi}}
    \newcommand{\qPhi}{{\boldsymbol \Phi}}
    \newcommand{\qXi}{{\boldsymbol \Xi}}
    \newcommand{\qTheta}{{\boldsymbol \Theta}}
    \newcommand{\qLambda}{{\boldsymbol \Lambda}}
    \newcommand{\qSigma}{{\boldsymbol \Sigma}}
    \newcommand{\qUpsilon}{{\boldsymbol \Upsilon}}
    \newcommand{\qOmega}{{\boldsymbol \Omega}}
    \newcommand{\qPi}{{\boldsymbol \Pi}}
    \newcommand{\qvartheta}{{\boldsymbol \vartheta}}
    \newcommand{\ta}{{\tilde{a}}}
    \newcommand{\tb}{{\tilde{b}}}
    \newcommand{\te}{{\tilde{e}}}
    \newcommand{\tqC}{{\tilde{\qC}}}
    \newcommand{\tqH}{{\tilde{\qH}}}
    \newcommand{\bR}{{\bar{R}}}
    \newcommand{\bI}{{\bar{I}}}
    \newcommand{\bqH}{{\bar{\qH}}}
    \newcommand{\bqV}{{\bar{\qV}}}
    \newcommand{\bbR}{{\mathbb R}}
    \newcommand{\bbC}{{\mathbb C}}
    \newcommand{\calG}{{\mathcal G}}
    \newcommand{\calH}{{\mathcal H}}
    \newcommand{\calO}{{\mathcal O}}
    \newcommand{\calR}{{\mathcal R}}
    \newcommand{\calqH}{\boldsymbol{\calH}}
    \newcommand{\diag}{{\sf diag}}
    \newcommand{\tr}{{\sf tr}}
    \newcommand{\Ex}{{\sf E}}
    \newcommand{\Ve}{{\sf vec}}
\begin{document}

\title{Large System Achievable Rate Analysis of RIS-Assisted MIMO Wireless Communication \\ with Statistical CSIT}

\author{Jun Zhang$\stackrel{\ast}{,}$\footnote{$^{\ast}$J. Zhang and J. Liu are with Jiangsu Key Laboratory of Wireless Communications, Nanjing University of Posts and Telecommunications, Nanjing 210003, China, Email: \{zhangjun, 1218012037\}@njupt.edu.cn.}  
~Jie Liu$\stackrel{\ast}{,}$
~Shaodan Ma$\stackrel{\dagger}{,}$\footnote{$^{\dagger}$S. Ma is with the State Key Laboratory of Internet of Things for Smart City and Department of Electrical and Computer Engineering, University of Macau, Macao S.A.R. 999078, China, Email: shaodanma@um.edu.mo.
}
~Chao-Kai Wen$\stackrel{\ddagger}{,}$\footnote{$^{\ddagger}$C.-K. Wen is with Institute of Communications Engineering, National Sun Yat-sen University, Kaohsiung 804, Taiwan. Email: chaokai.wen@mail.asysn.edu.tw.}
and~Shi Jin$\stackrel{\S}{}$\footnote{$^{\S}$S. Jin is with the National Mobile Communications Research Laboratory, Southeast University, Nanjing 210096, China, Email: jinshi@seu.edu.cn.}
}

\date{}

\maketitle

\begin{abstract}
Reconfigurable intelligent surface (RIS) is an emerging technology to enhance wireless communication in terms of energy cost and system performance by equipping a considerable quantity of nearly passive reflecting elements. This study focuses on a downlink RIS-assisted multiple-input multiple-output (MIMO) wireless communication system that comprises three communication links of Rician channel, including base station (BS) to RIS, RIS to user, and BS to user. The objective is to design an optimal transmit covariance matrix at BS and diagonal phase-shifting matrix at RIS to maximize the achievable ergodic rate by exploiting the statistical channel state information at BS. Therefore, a large-system approximation of the achievable ergodic rate is derived using the replica method in large dimension random matrix theory. This large-system approximation enables the identification of asymptotic-optimal transmit covariance and diagonal phase-shifting matrices using an alternating optimization algorithm. Simulation results show that the large-system results are consistent with the achievable ergodic rate calculated by Monte Carlo averaging. The results verify that the proposed algorithm can significantly enhance the RIS-assisted MIMO system performance.
\end{abstract}

\begin{IEEEkeywords}
Reconfigurable intelligent surface, ergodic rate, statistical CSIT, transmit covariance matrix, diagonal phase-shifting matrix.
\end{IEEEkeywords}


\section{Introduction}

The fifth-generation (5G) wireless network is being commercially deployed in many countries this year. Compared with the former fourth-generation wireless network, 5G has considerable improvements in many aspects, such as capacity, coverage, privacy, security, user experience, and information interaction.
Meanwhile, supported by its low latency, large bandwidth, and high reliability, 5G introduces possibilities for the implementation of many new technologies, such as extended reality (XR) services, safe and reliable autonomous driving technology, and telemedicine remote control \cite{20Network-Saad}.
A foreseeable future trend is a continuous increase in network users and the high requirements of new applications for data rate, transmission delay, and service reliability. These trends are also the goals of sixth-generation wireless communication in the future.

Many emerging technologies applied in wireless communication systems, such as massive multiple-input multiple-output (MIMO), millimeter-wave communication, ultra-dense networks, have been proposed in recent years to achieve the above goals \cite{14JSAC-Andrews,16JSAC-Buzzi}. These technologies not only effectively improve the spectral efficiency and energy efficiency but also serve a massive number of users. However, these technologies introduce new issues of high energy consumption and increased hardware cost simultaneously \cite{20ComM-Wu,20WC-Huang}.
With the continuously increasing number of users and the additional demand of data rate, the rapidly growing energy consumption and hardware costs required for system operation will be difficult.
Considering that sacrificing wireless resources in exchange for system performance is a short-term solution, new technologies must be developed to improve the energy efficiency and reduce operating costs, such as integrated frequency bands, edge artificial intelligence, integrated terrestrial, satellite networks, and reconfigurable intelligent surface (RIS), continuously \cite{20Network-Saad}.

In particular, RIS has been recently proposed as a new emerging technology to achieve green communication since it can improve the wireless propagation environment by controlling the reflection coefficients \cite{18TSP-Hu,19TWC-Huang,20ComM-Wu,20WC-Huang,20TWC-Tang}.
The traditional reflecting surface was not considered earlier in terrestrial wireless communication systems because it only had fixed reflection phases that cannot be flexibly adjusted according to the signal and cannot be used in a real-time changing communication environment \cite{20TWC-Nadeem}.
Fortunately, recent studies on micro-electrical-mechanical systems and micromaterials have realized new achievements, which can facilitate the reflection phase change with the signal in real-time \cite{14LSA-CuiCoding}. The RIS comprises of many passive reflection units, wherein each reflection unit can independently replicate the incident signals and change their phase or amplitude. Compared with other technologies, RIS has the following advantages. First, RIS comprises many passive reflection units, thus, its operation only requires small energy consumption. Second, the RIS can be flexibly installed in suitable locations, such as the exterior walls of buildings, surfaces of trees or cars, and indoor walls, due to its light weight. Last, a low-cost RIS can obtain antenna gain by improving the wireless propagation environment without additional power consumption.

Motivated by these attractive characteristics of RIS, many works have recently focused on RIS-assisted communication systems, such as RIS-assisted multiple-input single-output (MISO) channels \cite{19TWC-Huang,20TWC-Nadeem,20TWC-Jung,18GLOBECOM-Wu,20WCL-Yan,20TCom-Abeywickrama,20TSP-Zhou} and RIS-assisted MIMO channels \cite{20JSAC-Zhangshuowen,20TWC-PanCH,20JSAC-Pan}. Specifically, in \cite{19TWC-Huang}, the authors considered the energy efficiency optimization problem for RIS-assisted MISO downlink multi-user system with zero-forcing (ZF) precoding to find the optimal RIS phase shifts and the power allocation with the transmit power and quality of service constraints.
An alternating optimization algorithm is proposed to fully reap the beamforming gains of the transmit precoding and RIS for the RIS-assisted MISO downlink system in \cite{18GLOBECOM-Wu} to maximize the received signal power under the transmit power constraint.
In \cite{20TSP-Zhou}, the RIS can enhance the sum rate performance of multi-group multicast MISO communication networks as it can improve the channel condition of the worst-case user in each group.
The RIS-assisted MIMO system was considered in \cite{20JSAC-Zhangshuowen}, in which the RIS reflection coefficients and the transmit covariance matrix were jointly designed.
In \cite{20TWC-PanCH}, the authors further studied the multi-cell MIMO downlink system with inter-cell interference and alternately optimized the precoding matrices and phase shifts.
In \cite{20JSAC-Pan}, the authors showed that the RIS can enhance the operation range of the wireless powered sensors, and at the same time improving the data rate performance of the information receivers in the RIS-assisted MIMO system.

It is worth noting that the aforementioned works are all based on the instantaneous channel state information at transmitter (CSIT). However, knowledge of the statistical CSIT only by the transmitter or RIS is realistic in practical applications because the channel estimation at RIS is difficult \cite{20TWC-Nadeem,20TSP2-Zhou,20WCL-Zhou,20JSAC-Zhangshuowen}.
Under the assumption of imperfect CSIT, the robust beamforming was designed for a RIS-aided multiuser MISO system in \cite{20TSP2-Zhou,20WCL-Zhou}.
By exploiting the statistical CSIT, \cite{19TVT-Han} studied the phase-shift optimization problem of the RIS-assisted MISO downlink single-user system via the maximum ratio combining precoding. Using the optimal linear precoding, \cite{20TWC-Nadeem} studied the max-min signal-to-interference-plus-noise ratio problem of RIS-assisted MISO downlink multi-user systems via random matrix theory, in which the designed phase-shifting matrix only depends on statistical CSIT. In \cite{ZhangICC20}, the optimal transmit covariance matrix and the diagonal phase-shifting matrix are designed by exploiting the statistical CSIT in RIS-assisted MIMO system with Rayleigh channel and without direct link from base station (BS) to user.

This study focuses on a downlink RIS-assisted MIMO wireless communication system comprising a BS, a user, and a RIS. As shown in Fig. \ref{fig:1}, all components are equipped with multiple antennas or nearly passive and cheap reflecting elements. There are three communication links between BS-user, BS-RIS, and RIS-user is assumed, and all links contain line-of-sight (LoS) and non-LoS (NLoS) components. The optimal transmit covariance and the diagonal phase-shifting matrices are obtained by exploiting the statistical CSIT to maximize the achievable ergodic rate. The large-system analysis presented in this study is based on the {\em replica method}. This approach was originally developed in statistical physics \cite{75JPF-Edwards} and successfully applied to wireless communication systems, such as code-division multiple-access channels \cite{02TIT-Tanaka}, MIMO channels \cite{03TIT-Moustakas,08JSAC-Muller,08TIT-Taricco,10TWC-Wen}, and MIMO relay channels \cite{11TCom-Wen,12TWC-Wen}.
In \cite{12TWC-Wen}, the authors analyzed the asymptotic mutual information of the MIMO relay Rician channel with a direct link from source to destination. Although there exist three Rician random matrices, only the product of two Rician random matrices in the large-system limit needs to handle in each time slot since there exist two time slots for the MIMO relay Rician channel. However, the RIS-assisted MIMO system is significantly different from the MIMO relay system in terms of generating new signals in the second time slot. In the RIS-assisted MIMO system with a direct link from BS to user, the {\em product and sum of three Rician random matrices} must be processed simultaneously since the transmission is finished in one time slot. Tackling this challenge makes that the result of this study is non-trivial and novel.
The main contributions of this study are summarized below.
\begin{itemize}
  \item A large-system approximation of the achievable ergodic rate is derived by using {\em replica method} in large dimension random matrix theory. Since the effective channel consists of {\em the Rician random matrix products and sums}, the result is general and can be applied to several scenarios because the effective channel comprises the Rician random matrix products and sums. Simulation results verify that the derived large-system approximation can provide accurate results even for small antenna systems.
  \item The large-system approximation is applied to design the transmit covariance matrix to maximize the achievable ergodic rate of the downlink RIS-assisted MIMO system by employing an iterative water-filling optimization algorithm based on the statistical CSIT. The result can be degraded in some special scenarios.
  \item The design of the diagonal phase-shifting matrix is also obtained by using the projected gradient ascent method. An alternating optimization algorithm is also introduced to find the two aforementioned matrices.
\end{itemize}

The rest of this paper is organized as follows. Section II introduces the channel model and problem formulation. Section III presents the main results, in which a large-system approximation of the achievable ergodic rate will be derived. These results will be used to design the optimal transmit covariance and diagonal phase-shifting matrices. Section IV presents the simulation results and Section V concludes the paper.

\emph{Notations}---We use uppercase and lowercase boldface letters to denote matrices and vectors, respectively. In addition, ${\bf{I}}_N$ denotes an $N \times N$ identity matrix while an all-zero matrix is denoted by ${\bf{0}}$, and an all-one matrix is denoted by ${\bf{1}}$. The matrix inequality $\succeq$ shows the positive semi-definiteness. The superscripts $(\cdot)^{H}$, $(\cdot)^{T}$, and $(\cdot)^{*}$ represent the conjugate-transpose, transpose, and conjugate operations, respectively. Moreover, we use $\Ex\{\cdot\}$ to denote expectation with respect to all random variables within the brackets and $\log(\cdot)$ is the natural logarithm. The complex number field is denoted by $\mathbb{C}$. For any matrix $\qA \in \mathbb{C}^{N \times n}$, we use $[{\bf A}]_{l,k}$ to denote the ($l$,$k$)-th entry, and $a_k$ denotes the $k$-th entry of the column vector $\bf{a}$. The operators $(\cdot)^{\frac{1}{2}}$, $(\cdot)^{-1}$, $(\cdot)^{-H}$, ${\tr}(\cdot)$, $\Ve\{\cdot\}$, and $\det(\cdot)$ represent the matrix principal square root, inverse, inverse and conjugate operations, trace, vectorization, and determinant, respectively. In addition, $\diag(\bf{x})$ denotes a diagonal matrix with an input vector $\bf{x}$ representing its diagonal elements.


\section{System Model and Problem Formulation}

\subsection{System Model}

\begin{figure}
\begin{center}
\resizebox{3.5in}{!}{%
\includegraphics*{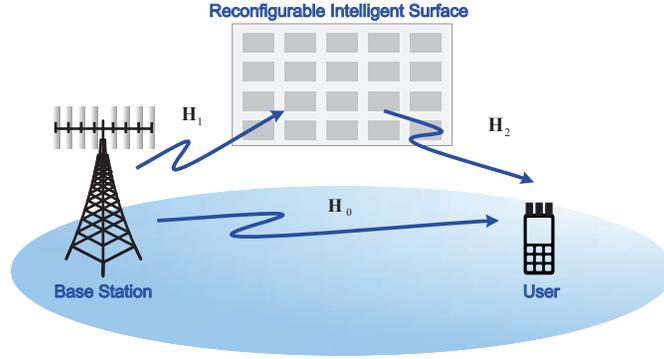} }%
\caption{A downlink RIS-assisted MIMO wireless communication system, which only the statistical CSIT is available at BS.}\label{fig:1}
\end{center}
\end{figure}

As shown in Fig. \ref{fig:1}, we consider a downlink RIS-assisted MIMO wireless communication system comprising a BS equipped with $N \geq 1$ antennas, a user equipped with $K \geq 1$ antennas, and a RIS equipped with $L \geq 1$ nearly passive reflecting elements. It is assumed that $\qH_0 \in \bbC^{K \times N}$, $\qH_1 \in \bbC^{L \times N}$, and $\qH_2 \in \bbC^{K \times L}$ denote the block fading channel matrices of the channels from BS to user, from BS to RIS, and from RIS to user, respectively. The received signals $\qy \in \bbC^K$ at user can be expressed as
\begin{align}\label{eq:the received signal}
 \qy &= (\qH_0 + \qH_2 \qTheta \qH_1) \qs + \qz,
\end{align}
where $\qs \in \bbC^{N}$ denotes the zero-mean transmitted Gaussian vector with covariance matrix $\qQ \in \bbC^{N \times N}$, $\qTheta = \diag\{\xi_1 e^{j\theta_1}, \xi_2 e^{j\theta_2}, \ldots, \xi_L e^{j\theta_L}\}$ represents the diagonal phase-shifting matrix of RIS,
$\theta_l \in [0,2\pi)$ and $\xi_l \in [0,1]$ denote the phase shift and amplitude reflection coefficient of the $l$-th reflecting element, respectively,
and $\qz\in \bbC^K$ is the noise vector whose entries consist of independent zero-mean circularly symmetric complex Gaussian with variance $\sigma^2$. Without loss of generality, we set $\xi_l=1$ for $l=1,2,\ldots,L$. If $\xi_l=0$ for $l=1,2,\ldots,L$, it is the case without RIS. Therefore, the transmit power constraint at BS can be expressed as
\begin{equation}\label{eq:BS transmitted power constrain}
  \tr \Ex\{\qs\qs^H\} = \tr \qQ \leq {P},
\end{equation}
where $P > 0$ is determined by the power budget of BS. 

We use the Kronecker model to characterize the spatial correlation of the MIMO channel for each link. The adopted channel model allows different transmit correlation matrices and LoS components. Specifically, we can write
\begin{align}
 \qH_i  &= \tqH_i + \bqH_i = \qR_i^\frac{1}{2} \qX_i \qT_i^\frac{1}{2} + \bqH_i, \mbox{~for~} i = 0,1,2,\label{eq:channel model}
\end{align}
where $\qR_0 \in\bbC^{K \times K}$, $\qT_0 \in\bbC^{N \times N}$, $\qR_1 \in\bbC^{L \times L}$, $\qT_1 \in\bbC^{N \times N}$, $\qR_2 \in\bbC^{K \times K}$, and $\qT_2 \in\bbC^{L \times L}$ are deterministic nonnegative definite matrices that characterize the spatial correlations of the downlink channel at BS, RIS, and user, respectively, $\qX_0 \equiv [\frac{1}{\sqrt{N}} X_{0,ij}] \in \bbC^{K \times N}$, $\qX_1 \equiv [\frac{1}{\sqrt{N}} X_{1,ij}] \in \bbC^{L \times N}$, and $\qX_2 \equiv [\frac{1}{\sqrt{L}} X_{2,ij}] \in \bbC^{K \times L}$ consist of random components of the three channels in which the elements $X_{0,ij}$'s, $X_{1,ij}$'s, and $X_{2,ij}$'s are independent and identically distributed (i.i.d.)~complex zero-mean random variables with unit variance, and $\bqH_0 \in \bbC^{K \times N}$, $\bqH_1 \in \bbC^{L \times N}$, and $\bqH_2 \in \bbC^{K \times L}$ are deterministic matrices corresponding to the LoS components of the three channels, respectively.

For the above channel models, we define the Rician factors of three channels as
\begin{equation}
 \kappa_i = \frac{\|\bqH_i\|^2_{F}}{\Ex \{\|\tqH_i\|^2_{F}\}}, \mbox{~for~} i = 0,1,2.
\end{equation}
We also denote the large-scale fading coefficients of the three links by $\Gamma_i$ for $i = 0,1,2$. For conciseness, the effects of $\Gamma_i$ are absorbed into $\qT_i$ and $\bqH_i$ for $i = 0,1,2$, respectively. Since the variances of the random component of channels $\qX_0$, $\qX_1$, and $\qX_2$ are $1/N$, $1/N$, and $1/L$, respectively, in \eqref{eq:channel model}, $\qR_i$, $\qT_i$, and $\bqH_i$ (for $i = 0,1,2$) are normalized as follows
\begin{subequations}\label{eq: normalized channel}
\begin{align}
   & \tr \qR_0 = K,  \quad
  \tr \qT_0 = \frac{1}{\kappa_0+1} N^2 \Gamma_0, \quad \tr \bqH_0 \bqH_0^H = \frac{\kappa_0}{\kappa_0+1} NK \Gamma_0, \label{eq: normalized channel 0}\\
  & \tr \qR_1 = L,  \quad
  \tr \qT_1 = \frac{1}{\kappa_1+1} N^2 \Gamma_1, \quad \tr \bqH_1 \bqH_1^H = \frac{\kappa_1}{\kappa_1+1} NL \Gamma_1, \label{eq: normalized channel 1}\\
  &\tr \qR_2 = K,  \quad
  \tr \qT_2 = \frac{1}{\kappa_2+1} L^2 \Gamma_2,  \quad \tr \bqH_2 \bqH_2^H = \frac{\kappa_2}{\kappa_2+1} KL \Gamma_2. \label{eq: normalized channel 2}
\end{align}
\end{subequations}

We further assume that only the statistical CSIT, i.e., $\{\qR_i, \qT_i, \bqH_i, \kappa_i, \Gamma_i, \mbox{~for~} i = 0,1,2\}$, is available at BS since it is more realistic than instantaneous CSI. As such, the achievable ergodic rate $R$ for the MIMO channel can be expressed as
\begin{align}\label{eq:ergodic rate}
 R\left(\qQ,\qTheta \right) = & \Ex_{\{\qH_i,i = 0,1,2\}} \left\{ \log\det \left(\qI_K + \frac{1}{\sigma^2} \qH \qQ \qH^H \right) \right\},
\end{align}
where $\qH = \qH_0 + \qH_2 \qTheta \qH_1$ denotes the effective channel.

\subsection{Problem Formulation}
Our objective is to maximize the achievable ergodic rate subject to the transmit power constraint \eqref{eq:BS transmitted power constrain} at BS by determining the optimal transmit covariance matrix at BS and the optimal diagonal phase-shifting matrix at RIS. Thus, our optimization problem is then
\begin{align}\label{eq: ergodic rate maximization P1}
  ({\rm P}1)~~\max_{\qQ,\qTheta} &~~  R\left(\qQ,\qTheta\right)  \\
  \mbox{s.t.} &~~ \tr \qQ \leq N P, ~~\qQ \succeq \qzero,  ~~ \qTheta = \diag\{e^{j\theta_1}, e^{j\theta_2}, \ldots, e^{j\theta_L}\},~~ \theta_l\in [0,2\pi). \nonumber
\end{align}
However, the quest of the optimal solution for (${\rm P}1$) in \eqref{eq: ergodic rate maximization P1} is extremely challenging because the problem is not a convex problem since the norm of elements in the diagonal phase-shifting matrix $\qTheta$ is 1. In addition, Monte-Carlo averaging over the channels is required to evaluate the achievable ergodic rate $R\left(\qQ,\qTheta \right)$ in \eqref{eq:ergodic rate}, thus making the overall computational complexity prohibitive. To tackle these challenges, we present an approach to solve the optimization problem (${\rm P}1$) in the next section using the large-system regime.

\section{Transmit Covariance Matrix and Phase-Shifting Matrix Optimization}

\subsection{Large System Analysis}

In this section, we shall derive the analytical expression for the ergodic rate in the large-system regime, i.e., $N$, $L$, and $K$ all go to infinity with the ratios $L/N$ and $K/L$ kept constant at $\epsilon_1$ and $\epsilon_2$, respectively. To simplify the notation in the derivation, the effects of $\qQ$ and $\qTheta$ have been incorporated into $\{\qT_i, \bqH_i, i=0,1,2\}$ by the following replacements
\begin{subequations}\label{eq:replacement}
\begin{align}
  & \qT_0:= \qQ^{\frac{1}{2}} \qT_0 \qQ^{\frac{1}{2}}~~\mbox{and} ~~\bqH_0:= \bqH_0 \qQ^{\frac{1}{2}}, \\
  & \qT_1:= \qQ^{\frac{1}{2}} \qT_1 \qQ^{\frac{1}{2}}~~\mbox{and} ~~\bqH_1:= \bqH_1 \qQ^{\frac{1}{2}}, \\
  & \qT_2:= \qTheta^H \qT_2 \qTheta ~~\mbox{and}~~\bqH_2:= \bqH_2 \qTheta.
\end{align}
\end{subequations}
Thus, the achievable ergodic rate $R$ in \eqref{eq:ergodic rate} can be rewritten as
\begin{align}\label{eq:ergodic rate replacement}
 I = & \Ex_{\{\qH_i,i = 0,1,2\}} \left\{ \log\det \left(\qI_K + \frac{1}{\sigma^2} (\qH_0 + \qH_2 \qH_1) (\qH_0 + \qH_2 \qH_1)^H \right) \right\}.
\end{align}

Under the above large-systems regime, we get the following proposition.
\begin{proposition}\label{Pr:1}
The achievable ergodic rate $I$ in \eqref{eq:ergodic rate replacement} can be asymptotically approximated by
\begin{align}
 \bI = & \log\det\left(\qI_K + \frac{e_2}{\sigma^2} \qR_2 \right) + \log\det\left(\qI_L + e_1\qPsi_2\qR_1 \right) + \log\det \left( \qI_K + e_0 \qPsi_1 \qR_0 \right) \nonumber  \\
  & + \log\det \left(\qI_N + \qOmega \right) - N e_0 \te_0 - N e_1 \te_1 - L e_2 \te_2,\label{eq:Pr1 bI}
\end{align}
where $\{e_0, e_1, e_2, \te_0, \te_1, \te_2\}$ are the unique solutions of the following six equations\footnote{The proof of the existence and uniqueness of the solution is omitted since it is similar to Theorem 2 in \cite{13JSAC-ZhangJun}. The unique solutions of $\{e_k,\te_k\}_{k=0,1,2}$ are calculated by using an iterative algorithm.}
\begin{subequations}\label{eq:e}
\begin{align}
 e_0 = & \frac{1}{N} \tr \left(\qI_N + \qOmega \right)^{-1} \qT_0,   \label{eq:e0}  \\
 e_1 = & \frac{1}{N} \tr \left(\qI_N + \qOmega \right)^{-1} \qT_1 ,  \label{eq:e1}  \\
 e_2 = & \frac{1}{L} \tr \left( e_1\qR_1\qPhi_1^{-1} + e_1^2 \qR_1 \qPi_{11} \qPi_{11}^H\qR_1 + \qPi_{21}\qPi_{21}^H \right) \qT_2 ,  \label{eq:e2}  \\
 \te_0 = & \frac{1}{N} \tr \left(\qPhi_0^{-1} \qPsi_1 - \qPi_{01} \qPi_{01}^H\right)\qR_0, \label{eq:te0}  \\
 \te_1 = & \frac{1}{N} \tr \left( \qPhi_1^{-1}\qPsi_2 - \qPi_{11} \qPi_{11}^H - \qPi_{12}\qPi_{12}^H \right) \qR_1, \label{eq:te1}  \\
 \te_2 = & \frac{1}{L} \tr \left( \qPhi_2^{-1} - \qPi_{31} \qPi_{31}^H - \qPi_{32} \qPi_{32}^H - \qPi_{33} \qPi_{33}^H \right)\qR_2,  \label{eq:te2}
\end{align}
\end{subequations}
with
\begin{subequations}
\begin{align}
  \qPhi_0=& \qI_K + e_0 \qPsi_1 \qR_0,   \\
  \qPhi_1=& \qI_L + e_1 \qPsi_2 \qR_1,   \\
  \qPhi_2=&\sigma^2\qI_K + e_2\qR_2,  \\
  \qPsi_0 = & \qPsi_1 \bqH_0 + \qPhi_2^{-1} \bqH_2 \qPhi_1^{-H} \bqH_1,\\
  \qPsi_1 = &\qPhi_2^{-1} (\qI_K - e_1 \bqH_2 \qR_1 \qPhi_1^{-1} \bqH_2^H \qPhi_2^{-1}), \\
  \qPsi_2 =& \bqH_2^H \qPhi_2^{-1} \bqH_2 + \te_2\qT_2, \\
  \qXi_0 = &\bqH_0 -  e_0 \qR_0 \qPhi_0^{-1} \qPsi_0,\\
  \qOmega = & \bqH_1^H \qPhi_1^{-1}\qPsi_2 \bqH_1 + \qPsi_0^H \qXi_0 + \bqH_0^H \qPhi_2^{-1} \bqH_2 \qPhi_1^{-H} \bqH_1 + \te_0 \qT_0 + \te_1 \qT_1,
\end{align}
\end{subequations}
and
\begin{subequations}
\begin{align}
  \qPi_{01} = &\qPhi_0^{-1} \qPsi_0 (\qI_N + \qOmega )^{-\frac{1}{2}}, \\
 \qPi_{11} = &\qPhi_1^{-1} \bqH_2^H \qPhi_2^{-1} \left( e_0\qR_0\qPhi_0^{-1}\right)^{\frac{1}{2}}, \\
 \qPi_{12} = &\qPhi_1^{-1} \left( \qPsi_2 \bqH_1 + \bqH_2^H \qPhi_2^{-1} \qXi_0 \right) (\qI_N + \qOmega )^{-\frac{1}{2}},  \\
 \qPi_{21} =& \left( \qPhi_1^{-H}\bqH_1 - e_1 \qR_1 \qPhi_1^{-1} \bqH_2^H \qPhi_2^{-1} \qXi_0\right) \left(\qI_N + \qOmega \right)^{-\frac{1}{2}},\\
  \qPi_{31} =& \qPhi_2^{-1} \bqH_2 \left( e_1\qR_1\qPhi_1^{-1}\right)^{\frac{1}{2}},   \\
 \qPi_{32} =& \qPsi_1 \left(e_0\qR_0 \qPhi_0^{-1} \right)^{\frac{1}{2}}, \\
 \qPi_{33} =&  \left( \qPhi_2^{-1} \bqH_2 \qPhi_1^{-H} \bqH_1 + \qPsi_1 \qXi_0 \right) \left(\qI_N + \qOmega \right)^{-\frac{1}{2}}.
\end{align}
\end{subequations}
\end{proposition}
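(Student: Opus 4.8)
The plan is to evaluate $I$ in \eqref{eq:ergodic rate replacement} by the \emph{replica method}. First I would represent the inverse determinant as a complex Gaussian integral over an auxiliary $K$-vector,
\[
\det\!\left(\qI_K + \tfrac{1}{\sigma^2}(\qH_0+\qH_2\qH_1)(\qH_0+\qH_2\qH_1)^H\right)^{-1}=\int\frac{d\qy}{\pi^K}\,e^{-\|\qy\|^2-\frac{1}{\sigma^2}\|(\qH_0+\qH_2\qH_1)^H\qy\|^2},
\]
so that $I=-\Ex\{\log\calZ\}$ with $\calZ$ the corresponding partition function, and then invoke the replica identity $\Ex\{\log\calZ\}=\lim_{u\to0}\partial_u\log\Ex\{\calZ^u\}$. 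For integer $u$ this introduces $u$ i.i.d.\ replica vectors $\qy^{(1)},\ldots,\qy^{(u)}$, and $\Ex\{\calZ^u\}$ becomes an expectation over $\qX_0,\qX_1,\qX_2$ of a Gaussian functional of the replicas.

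Second, because $\qX_0,\qX_1,\qX_2$ are mutually independent I would perform the three Gaussian averages sequentially. The average over $\qX_0$, which enters additively through $\qH_0=\qR_0^{\frac12}\qX_0\qT_0^{\frac12}+\bqH_0$, produces a quadratic form in $\{\qy^{(a)}\}$ coupled through the $u\times u$ replica-overlap matrix and through $\qR_0$ and $\qT_0$; the averages over $\qX_2$ and then $\qX_1$ --- the two matrices entering the \emph{product} $\qH_2\qH_1$ --- are nested, so at each stage one introduces auxiliary order-parameter matrices (via Hubbard--Stratonovich or Dirac-delta identities) to decouple rows and columns, and the $\qX_1$-average is carried out conditionally on the variables generated by the $\qX_2$-average. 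The deterministic LoS parts $\bqH_0,\bqH_1,\bqH_2$ survive all three averages and enter the order parameters linearly. The outcome is a finite-dimensional integral over the scalar order parameters that will become $\{e_k,\te_k\}_{k=0,1,2}$ together with their matrix partners.

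Third, in the regime $N,L,K\to\infty$ with $L/N\to\epsilon_1$ and $K/L\to\epsilon_2$ the integrand is exponential in $N$, so $\Ex\{\calZ^u\}$ is evaluated by Laplace's method. Imposing the replica-symmetric ansatz collapses the order-parameter matrices to scalars plus the fixed deterministic matrices, and the saddle-point (stationarity) conditions, after the analytic continuation $u\to0$, reduce exactly to the six coupled equations \eqref{eq:e0}--\eqref{eq:te2}; the auxiliary quantities $\qPhi_0,\qPhi_1,\qPhi_2$, $\qPsi_0,\qPsi_1,\qPsi_2$, $\qXi_0,\qOmega$ and $\qPi_{01},\ldots,\qPi_{33}$ emerge as the natural combinations of the square-root factors produced along the way. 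Finally, evaluating the free energy at the saddle point and extracting its $O(1)$ term as $u\to0$ gives $\bI$ in \eqref{eq:Pr1 bI}; existence and uniqueness of the fixed point is the content of the footnote (analogous to \cite{13JSAC-ZhangJun}).

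The hard part is the third matrix. In the MIMO-relay analyses of \cite{12TWC-Wen} one never faces more than the product of two Rician matrices in a given time slot, whereas here the additive direct link $\qH_0$ must be processed \emph{jointly} with the product $\qH_2\qH_1$ in a single stage of averaging. Keeping track of the cross-terms between replicas and among the three mean-plus-noise channels --- in particular deciding which scalar order parameter multiplies which of $\qR_i^{\frac12}$, $\qT_i^{\frac12}$, $\bqH_i$ at each step --- is the heavy bookkeeping, and it is precisely what forces the comparatively elaborate definitions of $\qPsi_0,\qXi_0,\qOmega$ and of the seven $\qPi$ matrices. The other, unavoidable, soft spot is the replica-symmetric ansatz together with the $u\to0$ continuation, which are standard in this literature but not rigorous; a fully rigorous proof would instead proceed by Gaussian interpolation / deterministic-equivalent techniques at the cost of a much longer argument.
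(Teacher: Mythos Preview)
Your plan is correct in spirit --- replica method, sequential Gaussian averaging over the three random matrices, replica-symmetric ansatz, saddle point, then $u\to0$ --- but it differs from the paper's proof in two concrete technical choices. First, the paper does \emph{not} start from the Gaussian-integral representation of $\det(\cdot)^{-1}$ with receive-side replicas $\qy^{(a)}\in\bbC^K$; instead it uses the mutual-information/free-energy form $F=-\Ex\{\log\Ex_\qs[e^{-\|\qy-\qH\qs\|^2/\sigma^2}]\}$ and replicates the \emph{transmit} signal, producing $r+1$ copies $\qs^{(0)},\ldots,\qs^{(r)}\in\bbC^N$. Second, the paper averages in the order $\qH_2\to\qH_1\to\qH_0$ (outer factor of the product first, inner factor second, direct link last), not $\qX_0\to\qX_2\to\qX_1$ as you propose. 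With transmit-side replicas the paper's order is the natural one: the $\qH_2$-average (via CLT on $\qv_2^{(\alpha)}=(\qH_0+\qH_2\qH_1)\qs^{(\alpha)}$) yields an overlap matrix $[\qC_2]_{\alpha\beta}=\tfrac{1}{L}(\qH_1\qs^{(\alpha)})^H\qT_2(\qH_1\qs^{(\beta)})$ that depends on $\qH_1$ only through $\qH_1\qs^{(\alpha)}$, so the $\qH_1$-average is again a clean Gaussian step; the additive $\qH_0$ is handled last. Your ordering would also close, but the $\qX_0$-average leaves cross-terms coupling the replicas to $\qH_2\qH_1$ that must be carried through both subsequent averages, so you gain no simplification by doing the direct link first. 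In either route the RS overlap matrices take the form $(a_i-b_i)\qI+b_i\qone\qone^T$, and the paper obtains the fixed-point parameters via the identification $e_i=a_i-b_i$, $\te_i=\tb_i$ after the eigendecomposition of Lemma~\ref{Lemma:Eigen-deco simple}; the result is the same expression \eqref{eq:Pr1 bI}--\eqref{eq:e}.
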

\begin{proof}
 See Appendix A.
\end{proof}

Using Proposition 1, we have the following corollaries for some special case.

\begin{itemize}
  \item Without RIS, i.e., $\qH_1 = \qH_2 = \qzero$, we can obtain the result of single-hop MIMO Rician channel as follows
   \begin{align}
 \bI = \log\det \left( \qI_K + \frac{e_0}{\sigma^2} \qR_0 \right)  + \log\det \left(\qI_N + \te_0 \qT_0 + \bqH_0^H (\sigma^2\qI_K + e_0 \qR_0)^{-1} \bqH_0  \right) - N e_0 \te_0 ,\label{eq:Pr1NoLOS bI}
\end{align}
where
\begin{subequations}\label{eq:eNoLOS}
\begin{align}
 e_0 = & \frac{1}{N} \tr \left(\qI_N + \te_0 \qT_0 + \bqH_0^H (\sigma^2\qI_K + e_0 \qR_0)^{-1} \bqH_0 \right)^{-1} \qT_0,   \label{eq:e0}  \\
 \te_0 = & \frac{1}{N} \tr \left(\sigma^2\qI_K + e_0 \qR_0 + \bqH_0 (\qI_N + \te_0 \qT_0)^{-1} \bqH_0^H \right)^{-1}\qR_0, \label{eq:te0}
\end{align}
\end{subequations}
which agrees with the result in \cite{08TIT-Taricco,10TIT-Dumont}.

  \item Without direct link from BS to UE, i.e., $\qH_0 =\qzero$, we have the following result of two-hop MIMO Rician product channels.
\begin{corollary}\label{Co:1}
The achievable ergodic rate $I$ in \eqref{eq:ergodic rate replacement} can be asymptotically approximated by
\begin{align}
 \bI = & \log\det\left(\qI_K + \frac{e_2}{\sigma^2} \qR_2 \right) + \log\det\left(\qI_L + e_1\qPsi_2\qR_1 \right) \nonumber  \\
  & + \log\det \left(\qI_N + \bqH_1^H \qPhi_1^{-1}\qPsi_2 \bqH_1 + \te_1 \qT_1 \right)  - N e_1 \te_1 - L e_2 \te_2, \label{eq:Co1 bI}
\end{align}
where $\{e_1, e_2, \te_1, \te_2\}$ are the unique solution of the following four equations
\begin{subequations}\label{eq:Co1_e}
\begin{align}
 e_1 = & \frac{1}{N} \tr \left(\qI_N + \bqH_1^H \qPhi_1^{-1}\qPsi_2 \bqH_1 + \te_1 \qT_1 \right)^{-1} \qT_1,  \label{eq:Co1 e1}  \\
 e_2 = & \frac{1}{L} \tr \left( e_1\qR_1 \qPhi_1^{-1} + \qPi \qPi^H \right) \qT_2,  \label{eq:Co1 e2}  \\
 \te_1 = & \frac{1}{N} \tr \left( \qPhi_1^{-1}\qPsi_2 - \qPsi_2 \qPi \qPi^H \qPsi_2 \right) \qR_1, \label{eq:Co1 te1}  \\
 \te_2 = & \frac{1}{L} \tr \left( \qPhi_2^{-1} - \qPhi_2^{-1} \bqH_2 e_1\qR_1\qPhi_1^{-1} \bqH_2^H \qPhi_2^{-1} - \qPhi_2^{-1} \bqH_2 \qPi \qPi^H \bqH_2^H \qPhi_2^{-1} \right)\qR_2,  \label{eq:Co1 te2}
\end{align}
\end{subequations}
and
\begin{align}
  \qPi =& \qPhi_1^{-H}\bqH_1 \left(\qI_N + \bqH_1^H \qPhi_1^{-1}\qPsi_2 \bqH_1 + \te_1 \qT_1 \right)^{-\frac{1}{2}}.
\end{align}
\end{corollary}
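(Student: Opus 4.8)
The plan is to obtain Corollary \ref{Co:1} directly from Proposition \ref{Pr:1} by specializing to $\qH_0=\qzero$, which is much cheaper than redoing the replica analysis. Since the channel model \eqref{eq:channel model} reads $\qH_0=\qR_0^{1/2}\qX_0\qT_0^{1/2}+\bqH_0$, the absence of the direct link lets us take $\bqH_0=\qR_0=\qT_0=\qzero$; I would substitute these into every quantity occurring in Proposition \ref{Pr:1} and propagate the resulting simplifications.

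First I would collapse the scalar fixed points. The integrands defining $e_0$ and $\te_0$ in \eqref{eq:e} each carry a factor $\qT_0$ or $\qR_0$, so $e_0=\te_0=0$. This forces $\qPhi_0=\qI_K$, $\qXi_0=\qzero$ (both of its terms vanish), $\qPsi_0=\qPhi_2^{-1}\bqH_2\qPhi_1^{-H}\bqH_1$, and $\qOmega=\bqH_1^H\qPhi_1^{-1}\qPsi_2\bqH_1+\te_1\qT_1$, which is exactly the matrix appearing inside the log-determinant of \eqref{eq:Co1 bI} and the inverse in \eqref{eq:Co1 e1}. Among the auxiliary matrices, $\qPi_{11}=\qPi_{32}=\qzero$ since each carries the factor $e_0\qR_0$, while $\qPi_{12}$, $\qPi_{21}$, $\qPi_{33}$ simply lose their $\qXi_0$ contributions; in particular $\qPi_{21}=\qPhi_1^{-H}\bqH_1(\qI_N+\qOmega)^{-1/2}$ coincides with the matrix $\qPi$ of the corollary. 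Deleting the now-zero term $\log\det(\qI_K+e_0\qPsi_1\qR_0)$ and the $Ne_0\te_0$ term from \eqref{eq:Pr1 bI} yields \eqref{eq:Co1 bI}, and the surviving equations among \eqref{eq:e} reduce to \eqref{eq:Co1 e1}--\eqref{eq:Co1 te2}.

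The one step that is not mere substitution is identifying the $\qPi_{12}$- and $\qPi_{31},\qPi_{33}$-expressions with the single matrix $\qPi$ of the corollary. For this I would use the push-through identities associated with $\qPhi_1=\qI_L+e_1\qPsi_2\qR_1$, namely $\qR_1\qPhi_1^{-1}=\qPhi_1^{-H}\qR_1$ and $\qPsi_2\qPhi_1^{-H}=\qPhi_1^{-1}\qPsi_2$ (valid because $\qR_1$ and $\qPsi_2$ are Hermitian, the latter since $\qPhi_2=\sigma^2\qI_K+e_2\qR_2$ is Hermitian). These give $\tr\,\qPi_{12}\qPi_{12}^H\qR_1=\tr\,\qPsi_2\qPi\qPi^H\qPsi_2\qR_1$, $\qPi_{31}\qPi_{31}^H=\qPhi_2^{-1}\bqH_2\,(e_1\qR_1\qPhi_1^{-1})\,\bqH_2^H\qPhi_2^{-1}$, and $\qPi_{33}\qPi_{33}^H=\qPhi_2^{-1}\bqH_2\qPi\qPi^H\bqH_2^H\qPhi_2^{-1}$, exactly the forms in \eqref{eq:Co1 te1} and \eqref{eq:Co1 te2}. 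I expect this bookkeeping with the matrix square roots to be the only slightly delicate part — confirming that $e_1\qR_1\qPhi_1^{-1}$ is Hermitian positive semidefinite so $(e_1\qR_1\qPhi_1^{-1})^{1/2}$ is well defined, and carrying out the cyclic-trace rearrangements — while everything else is a mechanical propagation of zeros through the formulas of Proposition \ref{Pr:1}.
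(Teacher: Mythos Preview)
Your approach is correct and is exactly what the paper does: Corollary \ref{Co:1} is stated in the paper simply as the specialization of Proposition \ref{Pr:1} to $\qH_0=\qzero$, with no separate derivation given. Your substitution $\bqH_0=\qR_0=\qT_0=\qzero$, the resulting collapse $e_0=\te_0=0$, $\qXi_0=\qzero$, $\qPi_{11}=\qPi_{32}=\qzero$, the identification $\qPi_{21}=\qPi$, and the push-through identities $\qR_1\qPhi_1^{-1}=\qPhi_1^{-H}\qR_1$, $\qPsi_2\qPhi_1^{-H}=\qPhi_1^{-1}\qPsi_2$ used to rewrite the $\qPi_{12}$, $\qPi_{31}$, $\qPi_{33}$ contributions in terms of $\qPi$ are all correct and in fact spell out more detail than the paper provides.
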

Using the relationship between the Shannon transform and the Stieltjes transform \cite{13JSAC-ZhangJun}, i.e., $\frac{\partial \frac{1}{K}\log\det\left( \qI_K+\frac{1}{\sigma^2}\qB_K \right)}{\partial \sigma^2} = \frac{1}{K} \tr \left(\sigma^2\qI_K + \qB_K \right)^{-1}  - \frac{1}{\sigma^2}$ with $\qB_K = \qH_2 \qH_1 \qH_1^H \qH_2^H$, we also obtain a useful result for Stieltjes transform of Rician random matrix product in random matrix theory as follows
\begin{corollary}\label{Co:2}
Defining the Stieltjes transform of $\qB_K$ as $m_{\qB_K}(\omega)\triangleq \frac{1}{K} \tr \left( \qB_K + \omega \qI_K \right)^{-1}$. As $K\rightarrow \infty$, we have
\begin{equation} \label{eq:AidstjCong}
  \Ex\left\{m_{\qB_K}(\omega)\right\} - \frac{1}{K} \tr \left(e_2\qR_2 + \omega \qI_K \right) ^{-1} \xrightarrow{a.s.} 0~~\mbox{for } \omega \in \bbR^{+},
\end{equation}
where $\{e_1, e_2, \te_1, \te_2\}$ are determined by \eqref{eq:Co1_e} and $\qPhi_2 = e_2\qR_2 +\omega\qI_K$.
\end{corollary}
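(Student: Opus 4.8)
The plan is to obtain \eqref{eq:AidstjCong} by differentiating the large-system Shannon transform of Corollary~\ref{Co:1} with respect to the noise variance, as the sentence preceding the statement suggests. Fix $\omega\in\bbR^{+}$ and set $\sigma^{2}:=\omega$ throughout Corollary~\ref{Co:1}, so that $\qPhi_2=e_2\qR_2+\omega\qI_K$ and $\{e_1,e_2,\te_1,\te_2\}$ become smooth functions of $\omega$. The identity
\begin{equation}
\frac{\partial}{\partial\omega}\,\frac{1}{K}\log\det\!\left(\qI_K+\tfrac{1}{\omega}\qB_K\right)
=\frac{1}{K}\tr\!\left(\qB_K+\omega\qI_K\right)^{-1}-\frac{1}{\omega}
=m_{\qB_K}(\omega)-\frac{1}{\omega}
\end{equation}
gives $\Ex\{m_{\qB_K}(\omega)\}=\frac{1}{\omega}+\frac{\partial}{\partial\omega}\big[\frac{1}{K}\Ex\log\det(\qI_K+\frac{1}{\omega}\qB_K)\big]$, the interchange of $\Ex$ and $\partial_\omega$ being justified by dominated convergence (the integrand and its $\omega$-derivative are bounded uniformly for $\omega$ in a compact subset of $\bbR^{+}$). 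By Corollary~\ref{Co:1}, $\frac{1}{K}\Ex\log\det(\qI_K+\frac{1}{\omega}\qB_K)$ and $\frac{1}{K}\bI(\omega)$ share the same large-$K$ limit; to transfer this to the $\omega$-derivative I would use the standard analyticity argument, extending both quantities to analytic, locally uniformly bounded functions of $\omega$ on a complex neighbourhood of $\bbR^{+}$ and invoking Vitali's theorem. It then remains to show $\frac{1}{\omega}+\frac{1}{K}\frac{d\bI}{d\omega}=\frac{1}{K}\tr(e_2\qR_2+\omega\qI_K)^{-1}$.

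The key to evaluating $\frac{d\bI}{d\omega}$ is that $\bI$ in \eqref{eq:Co1 bI} is \emph{stationary} with respect to each of $e_1,e_2,\te_1,\te_2$: a direct check shows that $\partial\bI/\partial\te_1$, $\partial\bI/\partial e_1$, $\partial\bI/\partial\te_2$, $\partial\bI/\partial e_2$ set to zero reproduce exactly the four fixed-point equations in \eqref{eq:Co1_e} (this uses the push-through identities $\qR_1\qPhi_1^{-1}=\qPhi_1^{-H}\qR_1$ and $e_1\qR_1\qPhi_1^{-1}\qPsi_2=\qI_L-\qPhi_1^{-H}$, the latter from $\qPhi_1^{H}=\qI_L+e_1\qR_1\qPsi_2$, together with the fact that $\qR_1\qPsi_2$ commutes with $\qPhi_1^{H}$). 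Hence $\frac{d\bI}{d\omega}$ collapses to the partial derivative in the \emph{explicit} $\omega$-dependence only. In \eqref{eq:Co1 bI}, $\omega$ appears explicitly through the factor $\tfrac{e_2}{\omega}$ in the first term and through $\qPhi_2^{-1}=(\omega\qI_K+e_2\qR_2)^{-1}$ inside $\qPsi_2$. Writing $\log\det(\qI_K+\frac{e_2}{\omega}\qR_2)=\log\det\qPhi_2-K\log\omega$, the first term contributes precisely $\tr\qPhi_2^{-1}-K/\omega$ --- the claimed answer --- so the proof reduces to showing that the $\omega$-derivatives of $\log\det(\qI_L+e_1\qPsi_2\qR_1)$ and of $\log\det(\qI_N+\bqH_1^{H}\qPhi_1^{-1}\qPsi_2\bqH_1+\te_1\qT_1)$ together contribute nothing. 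Using $\partial\qPsi_2/\partial\omega=-\bqH_2^{H}\qPhi_2^{-2}\bqH_2$, the push-through identity, and the $e_2$-equation \eqref{eq:Co1 e2}, I would carry out these two derivatives and show they cancel, which yields $\frac{d\bI}{d\omega}=\tr\qPhi_2^{-1}-K/\omega$ and hence \eqref{eq:AidstjCong}.

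The main obstacle is this last cancellation. The two derivatives reduce to expressions of the form $\tr\!\big[\bqH_2^{H}\qPhi_2^{-2}\bqH_2\,(e_1\qR_1\qPhi_1^{-1}+\qPi\qPi^{H})\big]$, and showing these add to zero is the delicate step: the RIS--user line-of-sight matrix $\bqH_2$ enters here, the factors $\qPhi_1$ and $\qPsi_2$ do not commute, and one must repeatedly exploit $\qR_1\qPhi_1^{-1}=\qPhi_1^{-H}\qR_1$ and the $e_2$ relation \eqref{eq:Co1 e2} to match terms (the computation is immediate when $\bqH_2=\qzero$). A secondary technical point is the analytic-continuation step used to deduce convergence of $\frac{d\bI}{d\omega}$ from convergence of $\bI$ itself: this needs the convergence in Corollary~\ref{Co:1} to hold locally uniformly in $\omega$ (ideally on a complex neighbourhood) rather than merely pointwise, after which Vitali/Montel makes it routine. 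An alternative avoiding the differentiation altogether is to read the deterministic equivalent of $\Ex(\qB_K+\omega\qI_K)^{-1}$ directly off the replica saddle-point equations in Appendix~A, for which $(e_2\qR_2+\omega\qI_K)^{-1}$ is the natural candidate; the Shannon--Stieltjes route is preferable here only because Corollary~\ref{Co:1} is already in hand.
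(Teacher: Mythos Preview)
Your approach---differentiate the Shannon transform of Corollary~\ref{Co:1} in $\omega=\sigma^{2}$ and collapse the total derivative via stationarity of $\bI$ at the fixed point $\{e_1,e_2,\te_1,\te_2\}$---is precisely what the paper indicates in the sentence preceding Corollary~\ref{Co:2}; the paper supplies no further detail. Your stationarity check is correct, and the identification of the explicit $\omega$-derivative of the first term as $\tr\qPhi_2^{-1}-K/\omega$ is right.

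The gap is the cancellation you flag as ``the main obstacle'': it does \emph{not} occur when $\bqH_2\neq\qzero$. Carrying out the two remaining $\omega$-derivatives exactly as you describe gives
\[
\partial_\omega\log\det\qPhi_1 = -\,e_1\tr\!\big[\qR_1\qPhi_1^{-1}\,\bqH_2^{H}\qPhi_2^{-2}\bqH_2\big],
\quad
\partial_\omega\log\det\big(\qI_N+\bqH_1^{H}\qPhi_1^{-1}\qPsi_2\bqH_1+\te_1\qT_1\big) = -\,\tr\!\big[\qPi\qPi^{H}\,\bqH_2^{H}\qPhi_2^{-2}\bqH_2\big].
\]
Since $\qR_1\qPhi_1^{-1}=\qR_1^{1/2}(\qI_L+e_1\qR_1^{1/2}\qPsi_2\qR_1^{1/2})^{-1}\qR_1^{1/2}\succeq\qzero$, $\qPi\qPi^{H}\succeq\qzero$, and $\bqH_2^{H}\qPhi_2^{-2}\bqH_2\succeq\qzero$, both traces are nonnegative; the two contributions carry the \emph{same} sign and cannot sum to zero unless $\bqH_2=\qzero$. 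The $e_2$-equation \eqref{eq:Co1 e2} pairs $e_1\qR_1\qPhi_1^{-1}+\qPi\qPi^{H}$ with $\qT_2$, not with $\bqH_2^{H}\qPhi_2^{-2}\bqH_2$, so it offers no cancellation mechanism either. What the differentiation actually yields is
\[
\tfrac{1}{K}\tr\qPhi_2^{-1}\;-\;\tfrac{1}{K}\tr\!\big[\qPhi_2^{-1}\bqH_2\,(e_1\qR_1\qPhi_1^{-1}+\qPi\qPi^{H})\,\bqH_2^{H}\qPhi_2^{-1}\big],
\]
mirroring the single-hop Rician pattern in which the LoS matrix enters the deterministic equivalent of the resolvent explicitly (cf.\ \eqref{eq:eNoLOS}, where $(\sigma^{2}\qI_K+e_0\qR_0)^{-1}$ alone is \emph{not} the equivalent of the resolvent). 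Your argument is therefore complete and correct for $\bqH_2=\qzero$; for general $\bqH_2$ the ``delicate step'' cannot be carried out because the quantity you need to vanish is strictly negative.
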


  \item Rayleigh channels, i.e., $\bqH_0 = \bqH_1 = \bqH_2 =\qzero$, we have the following result.
  \begin{corollary}\label{Co:3}
The achievable ergodic rate $I$ in \eqref{eq:ergodic rate replacement} can be asymptotically approximated by
\begin{align}
 \bI = & \log\det\left(\qI_K + \frac{e_1e_2}{\sigma^2} \qR_2 \right) + \log\det \left( \qI_K + e_0 (\sigma^2\qI_K + e_1e_2\qR_2)^{-1} \qR_0 \right) \nonumber  \\
  & + \log\det\left(\qI_L + e_1\te_2\qT_2 \qR_1 \right)+ \log\det \left(\qI_N + \te_0 \qT_0 + \frac{L}{N} e_2\te_2 \qT_1 \right) - N e_0 \te_0 - 2L e_1 e_2\te_2,\label{eq:Co3 bI}
\end{align}
where $\{e_0, e_1, e_2, \te_0, \te_2\}$ are the unique solutions of the following five equations
\begin{subequations}\label{eq:e}
\begin{align}
 e_0 = & \frac{1}{N} \tr \left(\qI_N + \te_0 \qT_0 + \frac{L}{N} e_2\te_2 \qT_1 \right)^{-1} \qT_0,   \label{eq:e0}  \\
 e_1 = & \frac{1}{N} \tr \left(\qI_N + \te_0 \qT_0 + \frac{L}{N} e_2\te_2 \qT_1 \right)^{-1} \qT_1 ,  \label{eq:e1}  \\
 e_2 = & \frac{1}{L} \tr \left(\qI_L + e_1 \te_2 \qT_2 \qR_1\right)^{-1} \qT_2\qR_1 ,  \label{eq:e2}  \\
 \te_0 = & \frac{1}{N} \tr (\sigma^2\qI_K + e_0 \qR_0 + e_1e_2\qR_2)^{-1} \qR_0, \label{eq:te0}  \\
 \te_2 = & \frac{1}{L} \tr (\sigma^2\qI_K + e_0\qR_0 + e_1e_2\qR_2)^{-1}\qR_2.  \label{eq:te2}
\end{align}
\end{subequations}
\end{corollary}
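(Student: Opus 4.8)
The plan is to obtain Corollary~\ref{Co:3} by specializing Proposition~\ref{Pr:1} to $\bqH_0=\bqH_1=\bqH_2=\qzero$ and then removing the scaling redundancy that a Rician \emph{product} channel develops once its LoS parts vanish. First I would substitute $\bqH_i=\qzero$ into every auxiliary quantity of Proposition~\ref{Pr:1}. The LoS-free matrices collapse to $\qPsi_0=\qzero$, $\qXi_0=\qzero$, $\qPsi_1=\qPhi_2^{-1}$, $\qPsi_2=\te_2\qT_2$, $\qPhi_2=\sigma^2\qI_K+e_2\qR_2$, $\qPhi_1=\qI_L+e_1\te_2\qT_2\qR_1$, $\qPhi_0=\qI_K+e_0\qPhi_2^{-1}\qR_0$, and $\qOmega=\te_0\qT_0+\te_1\qT_1$. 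Every block $\qPi_{jk}$ that carries a factor $\bqH_i$ (or $\qXi_0$, or $\qPsi_0$) vanishes, so $\qPi_{01}=\qPi_{11}=\qPi_{12}=\qPi_{21}=\qPi_{31}=\qPi_{33}=\qzero$ and the single surviving coupling term is $\qPi_{32}=\qPhi_2^{-1}(e_0\qR_0\qPhi_0^{-1})^{1/2}$.

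Next I would feed these into the six fixed-point equations of Proposition~\ref{Pr:1} and into \eqref{eq:Pr1 bI}. The $e_0$- and $e_1$-equations become $e_k=\frac1N\tr(\qI_N+\te_0\qT_0+\te_1\qT_1)^{-1}\qT_k$ for $k=0,1$; the $e_2$-equation collapses to $e_2=\frac{e_1}{L}\tr\qPhi_1^{-1}\qT_2\qR_1$; and the $\te_1$-equation collapses to $\te_1=\frac{\te_2}{N}\tr\qPhi_1^{-1}\qT_2\qR_1$. The last two force $\te_1=\tfrac{L}{N}\tfrac{\te_2}{e_1}e_2$ — precisely the familiar scale ambiguity of a product of two Gaussian matrices. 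I would absorb it by relabelling $e_2\mapsto e_1 e_2$ (equivalently, taking the new $e_2$ to be $\frac1L\tr\qPhi_1^{-1}\qT_2\qR_1$), which gives $\te_1=\tfrac{L}{N}e_2\te_2$, $\qPhi_2=\sigma^2\qI_K+e_1e_2\qR_2$, and turns the $e_0,e_1,e_2$-equations into the first three equations of Corollary~\ref{Co:3}; moreover the constants $-Ne_0\te_0-Ne_1\te_1-Le_2\te_2$ of Proposition~\ref{Pr:1} become $-Ne_0\te_0-2Le_1e_2\te_2$.

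The remaining work is to simplify the two $\te$-traces and the $\log\det$ terms. With $\qPi_{01}=\qzero$ the $\te_0$-equation reduces to $\te_0=\frac1N\tr\qPhi_0^{-1}\qPhi_2^{-1}\qR_0$; since $\qPhi_2\qPhi_0=\qPhi_2+e_0\qR_0$, one has $\qPhi_0^{-1}\qPhi_2^{-1}=(\sigma^2\qI_K+e_0\qR_0+e_1e_2\qR_2)^{-1}$, which is the stated form. For $\te_2$ only $\qPi_{32}\qPi_{32}^H=\qPhi_2^{-1}e_0\qR_0\qPhi_0^{-1}\qPhi_2^{-1}$ remains (note $e_0\qR_0\qPhi_0^{-1}$ is Hermitian positive semidefinite), so $\te_2=\frac1L\tr\big[\qI_K-e_0\qPhi_2^{-1}\qR_0\qPhi_0^{-1}\big]\qPhi_2^{-1}\qR_2$; using $e_0\qPhi_2^{-1}\qR_0=\qPhi_0-\qI_K$ the bracket equals $\qPhi_0^{-1}$, whence $\te_2=\frac1L\tr\qPhi_0^{-1}\qPhi_2^{-1}\qR_2=\frac1L\tr(\sigma^2\qI_K+e_0\qR_0+e_1e_2\qR_2)^{-1}\qR_2$. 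Substituting $\qPsi_1=\qPhi_2^{-1}$, $\qPsi_2=\te_2\qT_2$, $\qOmega=\te_0\qT_0+\tfrac{L}{N}e_2\te_2\qT_1$ into the four $\log\det$ terms of \eqref{eq:Pr1 bI} turns them into $\log\det(\qI_K+\tfrac{e_1e_2}{\sigma^2}\qR_2)$, $\log\det(\qI_K+e_0(\sigma^2\qI_K+e_1e_2\qR_2)^{-1}\qR_0)$, $\log\det(\qI_L+e_1\te_2\qT_2\qR_1)$, and $\log\det(\qI_N+\te_0\qT_0+\tfrac{L}{N}e_2\te_2\qT_1)$; together with the constants this is exactly \eqref{eq:Co3 bI}.

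I expect the main obstacle to be the bookkeeping of that scale relabelling: recognizing that once $\bqH_1=\bqH_2=\qzero$ the pair $(e_2,\te_1)$ of Proposition~\ref{Pr:1} is no longer independent, and choosing the normalization so that $\qPhi_2$ carries the product $e_1e_2$ (rather than $e_2$ alone) and the three $\log\det$ factors split as written in \eqref{eq:Co3 bI}. The only other spot needing care is the $\qPhi_0-\qI_K$ identity used to collapse the $\te_2$-trace; everything else is routine cancellation of the vanishing $\qPi_{jk}$ blocks. Existence and uniqueness of the reduced five-variable fixed point are inherited from the corresponding statement for Proposition~\ref{Pr:1} cited in its footnote.
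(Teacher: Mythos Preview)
Your proposal is correct and follows exactly the route the paper itself takes: the paper states Corollary~\ref{Co:3} simply as the specialization of Proposition~\ref{Pr:1} to the Rayleigh case $\bqH_0=\bqH_1=\bqH_2=\qzero$, without spelling out the reductions. Your substitution of $\bqH_i=\qzero$ into $\qPsi_j$, $\qPhi_j$, $\qPi_{jk}$, the recognition that $\te_1$ and $e_2$ become dependent (yielding $\te_1=\tfrac{L}{N}e_2\te_2$ after the rescaling $e_2\mapsto e_1e_2$), and the $\qPhi_0^{-1}\qPhi_2^{-1}=(\sigma^2\qI_K+e_0\qR_0+e_1e_2\qR_2)^{-1}$ collapse are precisely the computations needed to arrive at \eqref{eq:Co3 bI} and the five fixed-point equations; nothing is missing.
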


  \item Rayleigh channels without direct link from BS to user, i.e., $\bqH_1 = \bqH_2 = \qH_0 =\qzero$, Corollary 1 and Corollary 3 can be further degraded as the result of MIMO Rayleigh product channels (or MIMO double scattering channels) in \cite{ZhangICC20,11Arxiv-Hoydis}.
\end{itemize}

Combining Proposition 1 and the replacements in \eqref{eq:replacement}, we can get the large-system approximation $\bR(\qQ,\qTheta)$ of $R(\qQ,\qTheta)$ as follows
\begin{align}
 \bR(\qQ,\qTheta) = & \log\det\left(\qI_K + \frac{e_2}{\sigma^2} \qR_2 \right) + \log\det\left(\qI_L + e_1 \qTheta^H (\bqH_2^H \qPhi_2^{-1} \bqH_2 + \te_2\qT_2) \qTheta \qR_1 \right) \nonumber  \\
  &+ \log\det \left( \qI_K + e_0 \qF_2 \qR_0 \right) + \log\det \left(\qI_N + \qF \qQ \right) - N e_0 \te_0 - N e_1 \te_1 - L e_2 \te_2,\label{eq:barR}
\end{align}
where
\begin{subequations} \label{eq:qF123}
\begin{align}
 \qF =& \bqH_1^H \qF_1^{-1} \qTheta^H (\bqH_2^H \qPhi_2^{-1} \bqH_2 + \te_2\qT_2) \qTheta \bqH_1 +\bqH_0^H \qF_2 \bqH_0 +  \bqH_1^H \qF_1^{-1} \qTheta^H \bqH_2^H \qPhi_2^{-1} \bqH_0 \nonumber  \\
     & + \bqH_0^H \qPhi_2^{-1} \bqH_2 \qTheta \qF_1^{-H} \bqH_1  + \te_0 \qT_0 + \te_1 \qT_1  - \left( \bqH_0^H \qF_2 + \bqH_1^H \qF_1^{-1} \qTheta^H \bqH_2^H \qPhi_2^{-1} \right) \nonumber  \\
     & \times e_0 \qR_0 (\qI_K+e_0\qF_2\qR_0)^{-1} \left(\qF_2 \bqH_0 + \qPhi_2^{-1} \bqH_2\qTheta \qF_1^{-H} \bqH_1 \right), \label{eq:qF} \\
 \qPhi_2=&\sigma^2\qI_K + e_2\qR_2,  \\
 \qF_1 =& \qI_L + e_1 \qTheta^H \left(\bqH_2^H \qPhi_2^{-1} \bqH_2 + \te_2\qT_2\right) \qTheta \qR_1,  \\
 \qF_2 = &\qPhi_2^{-1} (\qI_K - e_1 \bqH_2 \qTheta \qR_1 \qF_1^{-1} \qTheta^H \bqH_2^H \qPhi_2^{-1}),
\end{align}
\end{subequations}
in which $\{e_0, e_1, e_2, \te_0, \te_1, \te_2\}$ contain $\qQ$ and $\qTheta$ by using \eqref{eq:e} and \eqref{eq:replacement}.

The above large-system approximation provides very good estimates for the achievable ergodic rate even with finite number of antennas. Therefore, the optimization problem $({\rm P}1)$ in \eqref{eq: ergodic rate maximization P1} can be recast as
\begin{align}\label{eq: ergodic rate maximization P2}
  ({\rm P}2)~~\max_{\qQ,\qTheta} &~~  \bR\left(\qQ,\qTheta\right)  \\
  \mbox{s.t.} &~~ \tr \qQ \leq N P, ~~\qQ \succeq \qzero,  \nonumber \\
  &~~ \qTheta = \diag\{e^{j\theta_1}, e^{j\theta_2}, \ldots, e^{j\theta_L}\},~~ \theta_l\in [0,2\pi). \nonumber
\end{align}
In next few subsections, we propose an alternating method to solve the above optimization problem.

\subsection{Transmit Covariance Matrix Optimization}

For fixed $\qTheta$, we find that $\bR(\qQ,\qTheta)$ in \eqref{eq:barR} is strict concavity with respect to $\qQ$. By using the concave optimal method, the Karush-Kuhn-Tucker (KKT) conditions of the optimization problem in \eqref{eq: ergodic rate maximization P2} are
\begin{equation} \label{eq:P2 KKT}
\left\{
\begin{aligned}
 &- \qF \left(\qI_N + \qF \qQ \right)^{-1} + \qUpsilon - \mu \qI_N = 0, \\
 &\tr (\qUpsilon \qQ) = 0,~\qUpsilon\succeq 0,~\qQ\succeq 0,\\
 & \mu(NP - \tr\qQ) =0,~\mu\geq 0,
\end{aligned} \right.
\end{equation}
where $\qF$ is given by \eqref{eq:qF}, $\mu$ and $\qUpsilon$ are the Lagrange multipliers
associated with the problem constraints.
Thus, the optimization problem in \eqref{eq: ergodic rate maximization P2} is equivalent to the following problem
\begin{align}\label{eq: equivalent maximization P3}
  ({\rm P}3)~~\max_{\qQ} &~~ \log\det\left( \qI_N + \qF \qQ  \right)   \\
  \mbox{s.t.} &~~ \tr \qQ \leq N P, ~~\qQ \succeq \qzero.  \nonumber
\end{align}
We notice that the above problem $({\rm P}3)$ in \eqref{eq: equivalent maximization P3} can be solved by a standard waterfilling procedure and we obtain the following proposition \cite{08TIT-Taricco,12TWC-Wen,13JSAC-ZhangJun,16TIFS-Zhang}.
\begin{proposition}
Let $\qF = \qU_F \qLambda_F \qU_F^H$ is the singular value decomposition of the matrix $\qF$, the asymptotic optimal transmit covariance $\qQ^{\sf opt}$ is given by
\begin{equation}\label{eq: waterfilling Q}
 \qQ^{\sf opt} = \qU_F \qLambda_Q \qU_F^H,
\end{equation}
where $\qLambda_Q$ satisfies $\qLambda_Q = \left( \frac{1}{\mu} \qI_N - \qLambda_F^{-1} \right)^+$ with $(a)^+ = \max\{0, a\}$ and $\mu$ is chosen to satisfy the power constraints $\tr \qQ \leq N P$.
\end{proposition}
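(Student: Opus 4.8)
The plan is to obtain the proposition by solving the reduced program $({\rm P}3)$ in \eqref{eq: equivalent maximization P3}, in which $\qF$ is treated as a fixed Hermitian positive semidefinite matrix. This is legitimate because, for fixed $\qTheta$, problem $({\rm P}2)$ has already been shown equivalent to $({\rm P}3)$: the implicit dependence of $\{e_k,\te_k\}_{k=0,1,2}$ on $\qQ$ drops out of the stationarity condition by the fixed-point equations \eqref{eq:e}, so that $\nabla_{\qQ}\bR(\qQ,\qTheta)=\qF(\qI_N+\qQ\qF)^{-1}$ with $\qF$ held constant, which is precisely the first line of \eqref{eq:P2 KKT}. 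First I would observe that $({\rm P}3)$ is a convex program: using $\qF=\qU_F\qLambda_F\qU_F^H\succeq\qzero$ one has $\log\det(\qI_N+\qF\qQ)=\log\det(\qI_N+\qF^{\frac{1}{2}}\qQ\qF^{\frac{1}{2}})$, which is concave in $\qQ\succeq\qzero$, and the feasible set $\{\qQ\succeq\qzero:\tr\qQ\le NP\}$ is convex with a strictly feasible point (e.g. $\qQ=\tfrac{P}{2}\qI_N$); together with the strict concavity of $\bR(\cdot,\qTheta)$ already noted, any point satisfying the KKT system \eqref{eq:P2 KKT} is the unique global maximizer, so it suffices to exhibit one.

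Next I would align the eigenbases. Writing $\qF=\qU_F\qLambda_F\qU_F^H$ and trying $\qQ=\qU_F\qLambda_Q\qU_F^H$ with $\qLambda_Q\succeq\qzero$ diagonal, the objective of $({\rm P}3)$ collapses to $\sum_{i=1}^{N}\log\bigl(1+[\qLambda_F]_{ii}[\qLambda_Q]_{ii}\bigr)$ and the constraint to $\sum_{i=1}^{N}[\qLambda_Q]_{ii}\le NP$. That nothing is lost by this restriction — i.e. that the optimal $\qQ$ commutes with $\qF$ — can be argued either by a Hadamard/majorization inequality (for any feasible $\qQ$, replacing it by $\qU_F\,\diag(\qU_F^H\qQ\qU_F)\,\qU_F^H$ preserves $\tr\qQ$ and does not decrease $\log\det(\qI_N+\qF\qQ)$) or, more economically, simply by verifying that the diagonal candidate produced below is a KKT point and invoking the uniqueness from the previous step.

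Then I would solve the resulting scalar water-filling problem. Its Lagrangian is $\sum_i\log(1+\lambda_{F,i}\lambda_{Q,i})-\mu\bigl(\sum_i\lambda_{Q,i}-NP\bigr)+\sum_i\nu_i\lambda_{Q,i}$ with $\mu\ge0$ and $\nu_i\ge0$; stationarity gives $\lambda_{F,i}/(1+\lambda_{F,i}\lambda_{Q,i})=\mu-\nu_i$, and feeding in the complementary slackness $\nu_i\lambda_{Q,i}=0$ yields $\lambda_{Q,i}=(1/\mu-1/\lambda_{F,i})^{+}$, i.e. in matrix form $\qLambda_Q=\bigl(\tfrac{1}{\mu}\qI_N-\qLambda_F^{-1}\bigr)^{+}$ (with the convention that a null eigenvalue of $\qF$ forces $[\qLambda_Q]_{ii}=0$). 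Since $\log$ is strictly increasing the power budget is exhausted at the optimum, so $\mu>0$ is the unique multiplier with $\tr\qQ^{\sf opt}=\sum_i[\qLambda_Q]_{ii}=NP$. Finally I would check that $\qQ^{\sf opt}=\qU_F\qLambda_Q\qU_F^H$, together with $\qUpsilon=\qU_F\,\diag(\nu_1,\dots,\nu_N)\,\qU_F^H$ and this $\mu$, satisfies all of \eqref{eq:P2 KKT}; by the necessity and sufficiency of KKT for this concave program, $\qQ^{\sf opt}$ is the global maximizer of $({\rm P}3)$, hence of $({\rm P}2)$ for the given $\qTheta$, which is the claimed formula.

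The step I expect to be the main obstacle is the eigenbasis alignment in the second paragraph — establishing that an optimal $\qQ$ shares its eigenvectors with $\qF$. The cleanest route is probably the ``verify-a-KKT-point-then-use-uniqueness'' argument rather than the majorization inequality, since it sidesteps the subtlety that $\qF$ may be rank-deficient (in which case the power allocated along $\ker\qF$ is simply zero). A minor secondary point is making the $({\rm P}2)\leftrightarrow({\rm P}3)$ reduction fully rigorous, but this has effectively been settled in deriving \eqref{eq:P2 KKT} and needs only to be cited.
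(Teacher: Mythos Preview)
Your proposal is correct and follows exactly the standard water-filling derivation; the paper itself does not actually supply a proof but simply states that $({\rm P}3)$ ``can be solved by a standard waterfilling procedure'' and cites \cite{08TIT-Taricco,12TWC-Wen,13JSAC-ZhangJun,16TIFS-Zhang}. Your write-up therefore fills in precisely the argument the paper defers to the literature, and the eigenbasis-alignment step you flag as the main obstacle is indeed the only nontrivial point---your ``verify a KKT point then invoke uniqueness'' route is the cleanest way to handle it, including the rank-deficient case.
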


\renewcommand{\algorithmicrequire}{\textbf{produce}}
\begin{algorithm}
\setstretch{1.35}
\caption{Optimization Algorithm 1 of $\qQ$}
\label{alg:Framwork}
\begin{algorithmic}[1]
\REQUIRE~~Design of the transmit covariance matrix $\qQ^{\sf opt}$ for fixed $\qTheta$  \\
\STATE {\bfseries Initialize:} $\qQ^{(0)} = \qI_N, e_i^{(0)} = \te_i^{(0)} = 1~(i = 0,1,2)$, and error tolerance $\epsilon$ (we set $\epsilon=10^{-5}$ in the simulations);\\
\STATE {\bfseries for} $t = 0,1,2,\ldots$, do \\

\STATE \quad Given that $\qQ^{(t)}$, calculate $e_i^{(t+1)}$ and $\te_i^{(t+1)} ~(i = 0,1,2)$ based on \eqref{eq:e} and the replacement \eqref{eq:replacement} using $\qQ^{(t)}$ and $\qTheta$; \\
\STATE \quad Calculate $\qF$ based on \eqref{eq:qF} using $\qQ^{(t)}$, $e_i^{(t+1)}$, and $\te_i^{(t+1)} ~(i = 0,1,2)$; \\
\STATE \quad Calculate $\qQ^{(t+1)}$ based on \eqref{eq: waterfilling Q} in Proposition 2; \\
\STATE \quad Calculate $\bR(\qQ^{(t+1)},\qTheta)$ based on \eqref{eq:barR};\\
\STATE \quad {\bfseries Until} $|\bR(\qQ^{(t+1)},\qTheta)-\bR(\qQ^{(t)},\qTheta)| < \epsilon$, obtain $\qQ^{\sf opt} =\qQ^{(t+1)}$; \\
\STATE {\bfseries end for}\\
\end{algorithmic}
\end{algorithm}

It is observed that the asymptotic optimal transmit covariance $\qQ^{\sf opt}$ only depends on the matrix $\qF$ in \eqref{eq:qF}, which contains the statistical CSIT, i.e., $\{\qR_i, \qT_i, \bqH_i, \kappa_i, \Gamma_i, \mbox{~for~} i = 0,1,2\}$. Using Proposition 2, we have the following observations:
\begin{itemize}
  \item Without RIS: When there are no links of BS to RIS and RIS to user (i.e., $\qH_1 = \qH_2 = \qzero$), we have $\qF = \te_0 \qT_0 + \bqH_0^H (\sigma^2\qI + e_0 \qR_0)^{-1} \bqH_0$ which agrees with the result of single-hop Rician MIMO channels in \cite{08TIT-Taricco,10TIT-Dumont}.
  \item Without direct link from BS to user: When $\qH_0 = \qzero$, $\qF = \bqH_1^H \qF_1^{-1} \qTheta^H (\bqH_2^H \qPhi_2^{-1} \bqH_2 + \te_2\qT_2) \qTheta \bqH_1 + \te_1 \qT_1$, where $\qPhi_2=\sigma^2\qI_K + e_2\qR_2$ and $\qF_1 = \qI_L + e_1 \qTheta^H \left(\bqH_2^H \qPhi_2^{-1} \bqH_2 + \te_2\qT_2\right) \qTheta \qR_1$. It is shown that the optimal transmit covariance $\qQ^{\sf opt}$ is affected by all statistical CSIT, i.e., $\{\qR_i, \qT_i, \bqH_i, \kappa_i, \Gamma_i, \mbox{~for~} i = 1,2\}$.
  \item Rayleigh channels and without direct link: When $\bqH_1 = \bqH_2 = \qH_0 =\qzero$, we have $\qF = \te_1 \qT_1$. It means that the optimal transmit covariance $\qQ^{\sf opt}$ only depends on $\qT_1$ and does not depend on $\qT_2$, $\qR_1$, and $\qR_2$. Proposition 2 can be degraded to the result of MIMO double scattering channels for single-user case in \cite{11Arxiv-Hoydis}.
  \item Perfect CSIT: When the Rician factors of three channels $\kappa_t = \infty$ for $t =0,1,2$, the perfect CSIT is available at BS. For this case, we get $\qF = \frac{1}{\sigma^2} (\bqH_0 + \bqH_2 \qTheta  \bqH_1)^H (\bqH_0 + \bqH_2 \qTheta  \bqH_1)$ for fixed $\qTheta$. Thus, Proposition 2 can be degraded to the result of quasi-static block-fading channels in \cite{20JSAC-Zhangshuowen}.
\end{itemize}

Since $\{e_i, \te_i\}_{\forall i}$ are also the functions of $\qQ$, an iterative approach is required to find the optimal solution of $\qQ$ as shown in {\bfseries Algorithm 1}.

\subsection{Diagonal Phase-Shifting Matrix Optimization}

In this subsection, we will focus on the optimization of $\qTheta$. For fixed $\qQ$, the optimization problem \eqref{eq: ergodic rate maximization P2} is also equivalent to the following problem
\begin{align}\label{eq: ergodic rate maximization P4}
  ({\rm P}4)~~\max_{\qTheta} &~~  \bR\left(\qQ,\qTheta\right)  \\
  \mbox{s.t.} &~~ \qTheta = \diag\{e^{j\theta_1}, e^{j\theta_2}, \ldots, e^{j\theta_L}\},~~ \theta_l\in [0,2\pi). \nonumber
\end{align}
The above optimization problem with respect to the phase-shifting matrix is generally non-concave since the norm of diagonal elements in the phase-shifting matrix is 1. A suboptimal solution of $\qTheta$ can be solved using the projected gradient ascent \cite{20TWC-Nadeem}, in which the gradient search is along the monotonically increasing direction of $\bR\left(\qQ,\qTheta\right)$ under the constraint $\qTheta = \diag\{e^{j\theta_1}, e^{j\theta_2}, \ldots, e^{j\theta_L}\}$. Taking the derivative of $\bR\left(\qQ,\qTheta\right)$ with respect to $\vartheta_l = e^{j\theta_l}$, we have $\frac{\partial \bR}{\partial \vartheta_l}$ for $l = 1,2,\ldots,L$, as shown in Appendix B due to the complexity of the expression.

For ease of operation, we first set the step size of gradient ascent as $\Delta$ to proceed this approach. Let $\qvartheta^{(t)}$ = $[\vartheta_1^{(t)}, \vartheta_2^{(t)}, \ldots, \vartheta_L^{(t)}]^T$ and $\qp^{(t)} = \big[\frac{\partial \bR}{\partial \vartheta_1}^{(t)}, \frac{\partial \bR}{\partial \vartheta_2}^{(t)}, \ldots, \frac{\partial \bR}{\partial \vartheta_L}^{(t)} \big]^T$ denote the phase result and the computed ascent direction at the $t$-th step, respectively. Then we have a new phase-shifting vector as
\begin{align}
  \qvartheta^{(t+1)} = \exp \left(j\arg \left(\qvartheta^{(t)} + \Delta \qp^{(t)}\right)\right).\label{eq:update qvartheta}
\end{align}
Notice that the above new phase-shifting vector satisfies the constraint $|\vartheta_l^{(t+1)}|=1$ for $l = 1,2,\ldots,L$. We also obtain the corresponding phase-shirting matrix $\qTheta^{(t+1)} = \diag(\qvartheta^{(t+1)})$. The proposed iterative approach can be described as {\bfseries Algorithm 2}.

\renewcommand{\algorithmicrequire}{\textbf{produce}}
\begin{algorithm}
\setstretch{1.35}
\caption{Optimization Algorithm 2 for $\qTheta$}
\label{alg:Framwork}
\begin{algorithmic}[1]
\REQUIRE~~Design of the diagonal phase-shifting matrix $\qTheta^{\sf opt}$ for fixed $\qQ$ \\
\STATE {\bfseries Initialize:} $\qTheta^{(0)}$ is randomly generated by setting that the phases $\{\theta_l\}_{\forall l}$ are uniform and independent distribution in $[0,2\pi)$, $e_i^{(0)} = \te_i^{(0)} = 1~(i = 0,1,2)$, and error tolerance $\epsilon$;\\
\STATE {\bfseries for} $t = 0,1,2,\ldots$, do \\
\STATE \quad Given that $\qTheta^{(t)}$ is available. Calculate $e_i^{(t+1)}$ and $\te_i^{(t+1)} ~(i = 0,1,2)$ based on \eqref{eq:e} and the replacement \eqref{eq:replacement} using $\qQ$ and $\qTheta^{(t)}$; \\
\STATE \quad Calculate $\qp^{(t+1)}$ based on \eqref{eq:derivative barR}; \\
\STATE \quad Calculate $\qvartheta^{(t+1)}$ based on \eqref{eq:update qvartheta} and $\qTheta^{(t+1)} = \diag(\qvartheta^{(t+1)})$;\\
\STATE \quad Calculate $\bR(\qQ,\qTheta^{(t+1)})$ based on \eqref{eq:barR};\\
\STATE \quad {\bfseries Until} $|\bR(\qQ,\qTheta^{(t+1)})-\bR(\qQ,\qTheta^{(t)})| < \epsilon$, obtain $\qTheta^{\sf opt} = \qTheta^{(t+1)}$; \\
\STATE \quad {\bfseries end for} \\
\end{algorithmic}
\end{algorithm}

\subsection{Proposed Algorithm}

\renewcommand{\algorithmicrequire}{\textbf{produce}}
\begin{algorithm}
\setstretch{1.35}
\caption{Alternating Optimization Algorithm 3 for Problem (P2)}
\label{alg:Framwork}
\begin{algorithmic}[1]
\REQUIRE~~Design of the transmit covariance matrix $\qQ^{\sf opt}$ and diagonal phase-shifting matrix $\qTheta^{\sf opt}$  \\
\STATE {\bfseries Initialize:} $\qQ^{(0)} = \qI_N, \qTheta^{(0)}$ is randomly generated by setting that the phases $\{\theta_l\}_{\forall l}$ are uniform and independent distribution in $[0,2\pi)$, and error tolerance $\epsilon$;\\
\STATE {\bfseries for} $n = 0,1,2,\ldots$, do \\
\STATE \quad Given that $\qQ^{(n)}$ is available. Calculate $\qTheta^{(n+1)}$ based on {\bfseries Algorithm 1};\\
\STATE \quad Calculate $\qQ^{(n+1)}$ based on {\bfseries Algorithm 2}; \\
\STATE \quad Calculate $\bR(\qQ^{(n+1)},\qTheta^{(n+1)})$ based on \eqref{eq:barR};\\
\STATE \quad {\bfseries Until} $|\bR(\qQ^{(n+1)},\qTheta^{(n+1)})-\bR(\qQ^{(n)},\qTheta^{(n)})| < \epsilon$, obtain $\qQ^{\sf opt} =\qQ^{(n+1)}$ and $\qTheta^{\sf opt} = \qTheta^{(n+1)}$; \\
\STATE {\bfseries end for} \\
\end{algorithmic}
\end{algorithm}

In the above two subsections, we have done the optimization of transmit covariance matrix and diagonal phase-shifting matrix by using the alternating method, respectively. Now, we present the complete alternating optimization algorithm to find the above two matrices in {\bfseries Algorithm 3}. We first initialize the transmit covariance matrix $\qQ^{(0)} = \qI_N$ and then obtain the diagonal phase-shifting matrix $\qTheta^{(1)}$ according to \eqref{eq:update qvartheta}. Next, for fixed $\qTheta^{(1)}$, we update the optimal $\qQ^{(1)}$ according to \eqref{eq: waterfilling Q}. By iteratively calculating $\qTheta^{(n+1)}$ and $\qQ^{(n+1)}$, our algorithm will complete until convergence is satisfied.

Next, we discuss the convergence of our proposed algorithm. Firstly, the six parameters $e_i$ and $\te_i~(i = 0,1,2)$ are determined by functions \eqref{eq:e}, but those functions are implicit functions. The existence and uniqueness of the solution to \eqref{eq:e} should be considered in here, however, we omit the proof of the existence and uniqueness since the proof is similar to the previous works. By using the subsequence approach, the existence of the solution can be obtained in \cite{11TIT-Couillet}. By reduction to absurdity, the uniqueness of the solution can be proved in \cite{13JSAC-ZhangJun}. Secondly, $\qQ$ in \eqref{eq: waterfilling Q} is an optimal solution since $\bR(\qQ,\qTheta)$ in \eqref{eq:barR} is strict concavity with respect to $\qQ$ for fixed $\qTheta$. Finally, we note that $\qTheta$ in \eqref{eq:update qvartheta} is not a global optimal solution and only is locally optimal due to the non-convex constraint $\qTheta = \diag\{e^{j\theta_1}, e^{j\theta_2}, \ldots, e^{j\theta_L}\}$. However, the calculation of $\qTheta$ will converge since the gradient search is along the monotonically increasing direction of $\bR\left(\qQ,\qTheta\right)$ at each step. Hence, the proposed algorithm is guaranteed to converge.

\section{Simulation results}

Numerical simulations are conducted in this section to compare the analytical result $\bR(\qQ,\qTheta)$ in \eqref{eq:barR} with the Monte Carlo simulation result of the achievable ergodic rate $R(\qQ,\qTheta)$ in \eqref{eq:ergodic rate} and examine the effectiveness of the proposed algorithm for the optimization of transmit covariance matrix $\qQ$ and diagonal phase-shifting matrix $\qTheta$.
We consider a simulation setup in Fig \ref{fig:2}, where the coordinates of BS, RIS, and user are $(0; 10\mbox{m})$, $(0; d(\mbox{m}))$, and $(80\mbox{m};10\mbox{m})$, respectively, the transmit power at BS is $P = 10$dBm, the bandwith is $B=10$MHz, the noise power is $-94$dBm, the path loss model $\Gamma_i (d_{TR})[dB] = G_t +G_r -37.5-22\log_10(d_{TR}/1m)$ (where $G_t=G_r=5dBi$ denote the antenna gains (in dBi) at the transmitter and receiver, respectively) \cite{20WCL-Bjornson,20TWC-Nadeem}, and $\kappa_i=1$ for $i = 0,1,2$.
Without loss of generality, the LoS components $\bqH_i$ for $i=0,1,2$ are set to be all-one matrices, i.e., $[\bqH]_{m,n} = 1$ for $\forall m,n$ \cite{12TWC-Wen}, and the transmit and receive correlation matrices (i.e., $\qT_i$ and $\qR_i$ for $i=0,1,2$) are generated by \cite{03TIT-Moustakas}
\begin{equation} \label{eq: arrayPattern}
[\qT~\mbox{or}~\qR ]_{m,n} =\int_{-180}^{180}{ \frac{d\phi}{\sqrt{2\pi\delta^2}}e^{2\pi{\sf j}d_s(m-n)\sin\left(\frac{\pi\phi}{180}\right)-\frac{(\phi- \eta)^2}{2\delta^2}} },
\end{equation}
where $m, n$ are the indexes of antennas, $d_s$ is the relative antenna spacing (in wavelengths), $\varphi_n \in [-\pi/2, \pi/2)$ is the physical angle (in the plane of the arrays), $\eta$ is the mean angle, and $\delta$ is the root-mean-square angle spread. The relative antenna spacing $d_s =1$, the mean angle and the root-mean-square angle spread are set as $(\eta_0^{R}, \delta_0^{R}) = (0^\circ,30^\circ),  (\eta_0^{T}, \delta_0^{T}) = (10^\circ,5^\circ), (\eta_1^{R}, \delta_1^{R}) = (0^\circ,20^\circ), (\eta_1^{T}, \delta_1^{T}) = (0^\circ,5^\circ), (\eta_2^{R}, \delta_2^{R}) = (0^\circ,5^\circ), (\eta_0^{T}, \delta_0^{T}) = (0^\circ,30^\circ)$.

\begin{figure*}
\centering
\begin{psfrags}%
\psfragscanon%
\small{ \resizebox{10cm}{!}{\includegraphics{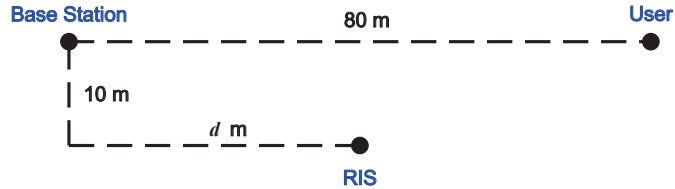}}}
\end{psfrags}%
\caption{The simulation setup where $d$ is a variable}\label{fig:2}
\end{figure*}

\begin{figure*}
\centering
\begin{psfrags}%
\psfragscanon%
\small{ \resizebox{10cm}{!}{\includegraphics{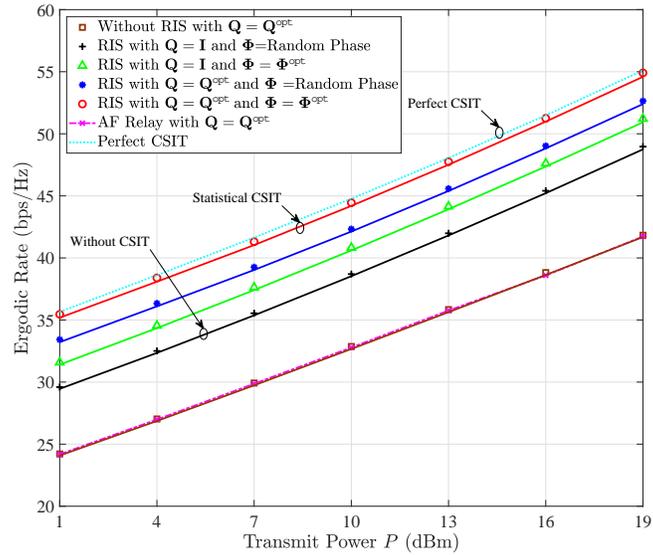}}}
\end{psfrags}%
\caption{Achievable ergodic rates and their approximations \emph{vs}. the transmit power $P$ at BS with $N = L = K = 8$ and $d = 40$m for various schemes. The markers and solid curves respectively denote the simulation and analytic results.}\label{fig:3}
\end{figure*}

\begin{figure*}
\centering
\begin{psfrags}%
\psfragscanon%
\small{ \resizebox{10cm}{!}{\includegraphics{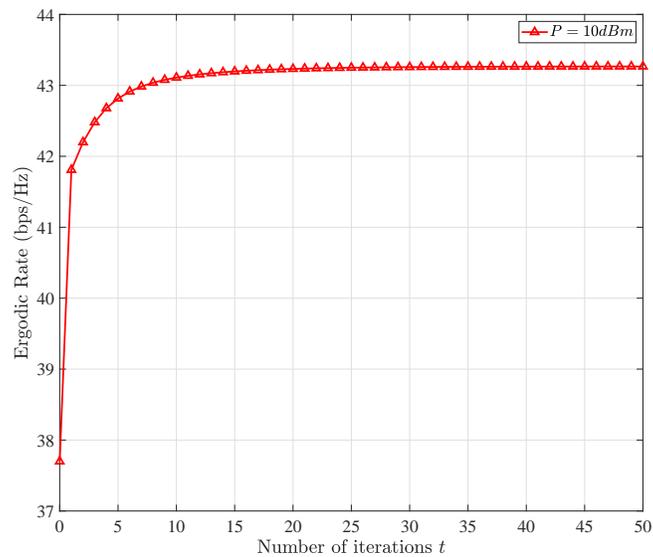}}}
\end{psfrags}%
\caption{Convergence of the proposed algorithm with $P = 10$dBm, $N = L = K = 8$, and $d=40$m.}\label{fig:4}
\end{figure*}

Fig.~\ref{fig:3} shows the results of the achievable ergodic rate and its large-system approximation versus the transmit power $P$ at BS of various schemes for the RIS-assisted MIMO system with $N = L = K = 8$ and $d = 40$m. The analytical results (solid curves) perfectly agree with the simulation results (markers) achieved by Monte Carlo averaging even for small number of antennas. Moreover, the RIS-assisted MIMO system outperforms the MIMO system without RIS and the proposed scheme (red curve) achieves superior performance to other schemes.
We also compare the performance of three different schemes: Perfect, statistical, and without CSIT, where the scheme of perfect CSIT means that the design of transmit covariance matrix $\qQ$ and diagonal phase-shifting matrix $\qTheta$ is based on the instantaneous CSIT $\qH_i (i=0,1,2)$ by $10000$ Monte-Carlo averaging, the scheme of statistical CSIT proposed in this paper is based on statistical CSIT $\{\qR_i, \qT_i, \bqH_i, \kappa_i, \Gamma_i, \mbox{~for~} i = 0,1,2\}$, and when without CSIT, $\qQ =\qI$ and $\qTheta$ is randomly generated by setting that the phases $\{\theta_l\}_{\forall l}$ are uniform and independent distribution in $[0,2\pi)$.
Obviously, the scheme without CSIT is the easiest one to implement and the complexity of scheme with perfect CSIT is much higher than the scheme with statistical CSIT since perfect CSIT is difficult to obtain at BS. However,
It can be observed that the gap between the schemes of perfect and statistical CSIT is very small and the proposed scheme is obviously superior to the scheme without CSIT.
Due to the low complexity of the scheme without CSIT, some recent works in \cite{21WCL-Nadeem,20PIMRC-Psomas} considered other cases with random phase shifts without CSIT, which can also perform better.

In Fig.~\ref{fig:3}, the performance of RIS-assisted MIMO system is also compared to the benchmark schemes of amplify-and-forword (AF) relay system \cite{12TWC-Wen} equipped with the same number of antennas $L = 8$ as RIS. The transmit power at BS $P_s$ and the transmit power at relay $P_r$ satisfy $P_s + P_r =P$ and the optimal power allocation is found through numerical exhaustive search \cite{20TWC-Nadeem}. We observe that the performance of AF relay system almost identically to the performance without RIS. It is because that all the power is allocated to BS since the gain of relay link is smaller than the gain of direct link from BS to user.

In Fig.~\ref{fig:4}, we show the large-system approximation of the achievable ergodic rate versus the number of iterations for the RIS-assisted MIMO system with $P = 10dBm$, $N = L = K = 8$, and $d = 40$m. It can be observed that the proposed algorithm converges with in $25$ iterations, which confirms the convergence of the proposed algorithm.

\begin{figure*}
\centering
\begin{psfrags}%
\psfragscanon%
\small{ \resizebox{10cm}{!}{\includegraphics{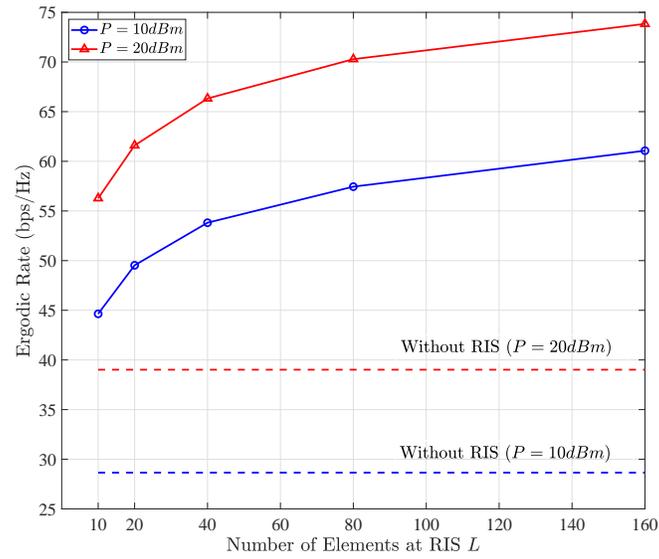}}}
\end{psfrags}%
\caption{Achievable ergodic rates and their approximations \emph{vs}. $L$ with $N = K = 8$ and $d = 40$m for various transmit power.}\label{fig:5}
\end{figure*}

Fig.~\ref{fig:5} depicts the achievable ergodic rate versus the number of elements at RIS $(L)$ with $N = K = 10$ and $d = 40$m. We observe that the achievable ergodic rate increases with increasing the number of elements at RIS and the RIS-assisted MIMO system outperforms the MIMO system without RIS.

\begin{figure*}
\centering
\begin{psfrags}%
\psfragscanon%
\small{ \resizebox{10cm}{!}{\includegraphics{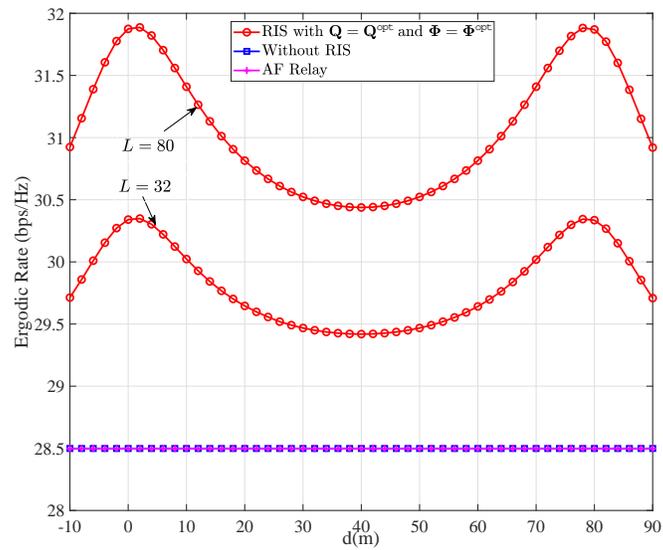}}}
\end{psfrags}%
\caption{Achievable ergodic rates \emph{vs}. $d$ with $N = K = 8$ for various schemes.}\label{fig:6}
\end{figure*}

Finally, Fig.~\ref{fig:6} compares the achievable ergodic rate versus the distance $d$ of three schemes with $N  = K = 8$ and $L = 32$ or $80$. All curves are employed the optimal schemes, i.e., $\qQ = \qQ^{\sf opt}$ and $\qTheta = \qTheta^{\sf opt}$. The result shows that when RIS is close to BS or user, the achievable ergodic rate can achieve its maximum; when RIS is located in the middle between BS and user, the achievable ergodic rate is minimum. This provides a reference for the deployment of RIS. We also see that the performance of AF relay system almost identically to the performance without RIS because all the power is allocated to BS.

\section{Conclusion}

Downlink RIS-assisted MIMO wireless communication system, which comprises three communication links of Rician channel, including BS to RIS, RIS to user, and BS to user, is investigated by exploiting the statistical CSIT. The large-system approximation of the achievable ergodic rate using the replica method is derived. The large-system approximation result is used to design the optimal transmit covariance and diagonal phase-shifting matrices at BS and RIS, respectively, to maximize the achievable ergodic rate. Numerical results reveal that the large-system approximation provides reliable performance predictions even for small antenna systems and verify the effectiveness of the proposed algorithm. Many developments are still ongoing due to the wide application of RIS to various wireless communication scenarios.

\section*{Appendix}
\subsection{Proof of Proposition \ref{Pr:1}}\label{Appendix: Proof of Proposition 1}

\subsubsection{Mutual Information}

Since $\qz\in \bbC^K$ is the noise vector whose entries consist of independent zero-mean circularly symmetric complex Gaussian with variance $\sigma^2$, the receive signal can be described as the following conditional pdf
\begin{align} \label{eq:conditional pdf}
 p(\qy|\qs,\calqH) = \frac{1}{(\pi\sigma^2)^K} e^{-\frac{1}{\sigma^2}\|\qy- (\qH_0+\qH_2\qH_1)\qs\|^2},
\end{align}
where $\calqH = \{\qH_i,i=0,1,2\}$.

Using the Bayes formula, the posteriori distribution can be expressed as
\begin{align}
 p(\qs|\qy,\calqH) = \frac{p(\qy|\qs,\calqH)p(\qs)}{p(\qy|\calqH)},
\end{align}
where $p(\qs)$ is an input distribution. For fixed $p(\qs)$ and $\calqH$, we have the following conditional mutual information of MIMO RIS-based channel
\begin{align}\label{eq:conditional mutual information}
 I(\qs;\qy|\calqH) = \Ex_{\{\qs,\qy\}} \left\{ \log \frac{p(\qy|\qs,\calqH)}{p(\qy|\calqH)}\bigg|\calqH \right\}.
\end{align}
Thus, the mutual information can be expressed by
\begin{align}
 I(\qs;\qy) & = \Ex_{\{\calqH\}} \left\{I(\qs;\qy|\calqH)\right\} \nonumber \\
            & = - \Ex_{\{\qy,\calqH\}} \left\{ \log \Ex_\qs \left\{ e^{-\frac{1}{\sigma^2}\|\qy- (\qH_0+\qH_2\qH_1)\qs\|^2} \right\} \right\} - K.  \label{eq:conditional mutual information}
\end{align}

For brevity of notation, we define
\begin{align}\label{eq:F}
 F \triangleq - \Ex_{\{\qy,\calqH\}} \left\{ \log T(\qy,\calqH) \right\},
\end{align}
where
\begin{align}\label{eq:T}
 T(\qy,\calqH) \triangleq \Ex_\qs \left\{e^{-\frac{1}{\sigma^2}\|\qy-(\qH_0+\qH_2\qH_1) \qs\|^2} \right\}.
\end{align}

By the mathematical derivation\footnote{For positive random variable $X$, we have $\lim_{r\rightarrow 0} \frac{\partial}{\partial r} \log \Ex \{ X^r\} = \lim_{r\rightarrow 0} \frac{\Ex \{X^r \log X\}}{\Ex \{ X^r\}} = \Ex \{\log X\}$}, \eqref{eq:F} can be rewritten as
\begin{align}\label{eq:F2}
 F = - \lim_{r\rightarrow 0} \frac{\partial}{\partial r} \log \Ex_{\{\qy,\calqH\}} \left\{ T^r(\qy,\calqH) \right\}.
\end{align}
From \eqref{eq:T}, we obtain
\begin{align}\label{eq:ETr}
 \Ex_{\{\qy,\calqH\}} \left\{ T^r(\qy,\calqH) \right\} =  \Ex_{\{\qS,\calqH\}} \left\{ \int d \qy \frac{1}{(\pi\sigma^2)^K} \prod_{\alpha=0}^r e^{-\frac{1}{\sigma^2}\|\qy- (\qH_0+\qH_2\qH_1) \qs^{(\alpha)}\|^2} \right\},
\end{align}
where $\qs^{(\alpha)}$ denotes the $\alpha$-th replica signal vector $\qs$ with the input distribution $p(\qs)$ for $\alpha = 0,1,2,\dots,r$. Let $\qS = [\qs^{(0)},\qs^{(1)},\ldots,\qs^{(r)}]$.

\subsubsection{Taking Expectation over $\qH_2$}

Define
\begin{align}
 \qv_2^{(\alpha)} = (\qH_0+\qH_2\qH_1) \qs^{(\alpha)}
\end{align}
and $\qV_2 = [\qv_2^{(0)},\qv_2^{(1)},\ldots,\qv_2^{(r)}]$. From the central limit theorem, as $N\rightarrow \infty$, $\qV_2$ converges to Gaussian random matrix with mean $\bqV_2 = [(\qH_0 +\bqH_2 \qH_1) \qs^{(0)},(\qH_0 +\bqH_2 \qH_1) \qs^{(1)},\ldots,(\qH_0 +\bqH_2 \qH_1) \qs^{(r)}]$ and covariance $\qC_2 \otimes \qR_2$, where $\qC_2 \in \bbC^{(r+1)\times (r+1)}$ is matrix with entries given by 
\begin{align}
 [\qC_2]_{\alpha,\beta} &= \frac{1}{L} \left(\qH_1 \qs^{(\alpha)}\right)^H \qT_2 \left(\qH_1 \qs^{(\beta)}\right),
\end{align}
for $\alpha,\beta = 0,\ldots,r$.

Thus, \eqref{eq:ETr} can be rewritten as
\begin{align}
 \Ex_{\{\qy,\calqH\}} \left\{ T^r(\qy,\calqH) \right\} =& \int \Ex_{\calqH} \left\{ p(\qy|\calqH) T^r(\qy,\calqH) \right\} d\qy  \nonumber\\
=&\Ex_{\qS}  \left\{ \int e^{\calG_2^{(r)}(\qC_2)} d \mu_2^{(r)}(\qC_2) \right\} + \calO(1), \label{eq:ETr2}
\end{align}
where
\begin{align}\label{eq:G_C}
 &\calG_2^{(r)}(\qC_2) = \log \Ex_{\qV_2} \left\{ \int d\qy \frac{1}{(\pi\sigma^2)^K} \prod_{\alpha=0}^r e^{-\frac{1}{\sigma^2} \|\qy-\qv_2^{(\alpha)}\|^2} \right\},
\end{align}
and
\begin{align}\label{eq:mu_C}
 \mu_2^{(r)}(\qC_2) =& \prod_{\alpha,\beta=0}^r \delta\left( \left(\qH_1 \qs^{(\alpha)}\right)^H \qT_2 \left(\qH_1 \qs^{(\beta)}\right) - L [\qC_2]_{\alpha,\beta} \right) 
\end{align}
is the probability measure of $\qC_2$, $\delta\left(\cdot\right)$ denotes the Dirac function, and $\calO(\cdot)$ being a constant as $N\rightarrow \infty$.

Now, we focus on \eqref{eq:G_C} and \eqref{eq:mu_C}. First, after integrating over $\qy$ and taking expectation over $\qV_2$ by Lemma \ref{Lemma:GIHST} and the matrix formula ${\tr(\qA\qB\qC\qD) = (\Ve(\qA^H))^H (\qD^H\otimes\qB) \Ve(\qC)}$, \eqref{eq:G_C} can be rewritten as
\begin{align}
 \calG_2^{(r)}(\qC_2)=& \log \Ex_{\qV_2} \left\{ \int d\qy \frac{1}{(\pi\sigma^2)^K} e^{-\frac{1}{\sigma^2}\sum_{\alpha=0}^r \|\qy-\qv_2^{(\alpha)}\|^2} \right\} \nonumber \\
  =& \Ex_{\qH_0,\qH_1} \left\{ \tr\left( \qA_2 \qS^H (\qH_0^H+\qH_1^H \bqH_2^H) \qB_2 (\qH_0+\bqH_2 \qH_1) \qS \right) \right\} \nonumber \\
 &- \log\det\left(\qI + \qC_2 \qSigma \otimes \qR_2 \right) - K \log\left(1+r\right), \label{eq:G_C_2}
\end{align}
where 
\begin{align}\label{eq:Sigma}
 \qSigma \triangleq \frac{1}{\sigma^2(r+1)}\left((r+1)\qI_{r+1} - \qone_{r+1} \qone_{r+1}^T\right),
\end{align}
\begin{align}\label{eq:AgBg}
  \qA_2\otimes \qB_2 
  &= - (\qSigma \otimes\qI) (\qC_2 \qSigma \otimes \qR_2 + \qI)^{-1}.
\end{align}

Next, we can rewrite \eqref{eq:mu_C} as
\begin{align}\label{eq:muC_2}
 \mu_2^{(r)}(\qC_2) = e^{-\calR_2^{(r)}(\qC_2)+ \calO(1)},
\end{align}
where $\calR_2^{(r)}(\qC_2)$ is the rate measure of $\mu_2^{(r)}(\qC_2)$ and is given by
\begin{align}\label{eq:RC1}
 &\calR_2^{(r)}(\qC_2) =  \max_{\tqC_2} \left\{ L \tr(\tqC_2\qC_2) - \log \Ex_{\{\qS,\qH_1\}} \left\{e^{ \tr \left(\tqC_2 \qS^H \qH_1^H \qT_2 \qH_1 \qS \right)} \right\} \right\},
\end{align}
where $\tqC_2 \in \bbC^{(r+1)\times(r+1)}$ being a symmetric matrix.

Substituting \eqref{eq:G_C_2} and \eqref{eq:muC_2} into \eqref{eq:ETr2} yields $\int  e^{\calG_2^{(r)}(\qC_2) - \calR_2^{(r)}(\qC_2) + \calO(1)} d \qC_2$. Therefore, as $N\rightarrow\infty$, the integration over $\qC_2$ can be performed via the saddle point method, yields
\begin{align}\label{eq:ETrmaxC1}
 -\lim_{N\rightarrow\infty} \log \Ex_{\{\qy,\calqH\}} \left\{ T^r(\qy,\calqH) \right\} = \min_{\qC_2} \left\{ - \calG_2^{(r)}(\qC_2) + \calR_2^{(r)}(\qC_2) \right\}.
\end{align}

\subsubsection{Taking Expectation over $\qH_1$}
From \eqref{eq:G_C_2} and \eqref{eq:RC1}, we find that $\calG_2^{(r)}(\qC_2)$ and $\calR_2^{(r)}(\qC_2)$ still contain the random component of channel. We further define
\begin{align}
 \qv_0^{(\alpha)} &= \qH_0 \qs^{(\alpha)} \mbox{~~and~~}
 \qv_1^{(\alpha)} = \qH_1 \qs^{(\alpha)}. \nonumber
\end{align}
Leting $\qV_0 = [\qv_0^{(0)},\qv_0^{(1)},\ldots,\qv_0^{(r)}]$ and $\qV_1 = [\qv_1^{(0)},\qv_1^{(1)},\ldots,\qv_1^{(r)}]$. As $N\rightarrow \infty$, $\qV_0$ converges to Gaussian random matrix with mean $\bqV_0 = [\bqH_0 \qs^{(0)}, \bqH_0 \qs^{(1)},\ldots, \bqH_0 \qs^{(r)}]$ and covariance $\qC_0 \otimes \qR_0$, $\qV_1$ converges to Gaussian random matrix with mean $\bqV_1 = [\bqH_1 \qs^{(0)}, \bqH_1 \qs^{(1)},\ldots, \bqH_1 \qs^{(r)}]$ and covariance $\qC_1 \otimes \qR_1$, where $\qC_0 \in \bbC^{(r+1)\times (r+1)}$ and $\qC_1 \in \bbC^{(r+1)\times (r+1)}$ are matrices with entries given, respectively, by
\begin{align}
 [\qC_0]_{\alpha,\beta} &= \frac{1}{N} \left(\qs^{(\alpha)}\right)^H \qT_0 \qs^{(\beta)}, \\
 [\qC_1]_{\alpha,\beta} &= \frac{1}{N} \left(\qs^{(\alpha)}\right)^H \qT_1 \qs^{(\beta)},
\end{align}
for $\alpha,\beta = 0,1,\ldots,r$.

Similar to \eqref{eq:ETr2}, by taking the expectation with respect to $\qH_1$, the exponent terms of \eqref{eq:G_C_2} and \eqref{eq:RC1} can be rewritten as
\begin{align}
 &\Ex_{\{\qS,\qH_0,\qH_1\}} \left\{e^{\tr\left( \qA_2 \qS^H (\qH_0^H+\qH_1^H \bqH_2^H) \qB_2 (\qH_0+\bqH_2 \qH_1) \qS \right)+ \tr \left(\tqC_2 \qS^H \qH_1^H \qT_2 \qH_1 \qS \right)} \right\} \nonumber  \\
 =&\Ex_{\{\qS,\qH_0\}}  \left\{ \int e^{\calG_1^{(r)}(\qC_1)} d \mu_1^{(r)}(\qC_1) \right\} + \calO(1), \label{eq:ETr3}
\end{align}
where
\begin{align}
 \calG_1^{(r)}(\qC_1) = & \tr \left( \qA_1 \right) - \log\det \left(\qI - \left(\qA_2 \otimes \bqH_2^H \qB_2 \bqH_2 +  \tqC_2 \otimes \qT_2 \right)\left( \qC_1 \otimes \qR_1 \right)\right), \label{eq:ETr3} \\
 \mu_1^{(r)}(\qC_1) = & \prod_{\alpha,\beta=0}^r \delta\left( {\qs^{(\alpha)}}^H \qT_1 \qs^{(\beta)} - N [\qC_1]_{\alpha,\beta} \right) \label{eq:mu2}
\end{align}
with
\begin{align}\label{eq:A0B0}
\qA_1 = & \qH_0^H\qB_2\qH_0\qS \qA_2 \qS^H + \qH_0^H \qB_2 \bqH_2 \qR_1 \bqH_2^H \qB_2 \qD_1^{-1} \qH_0 \qS \qA_2 \qC_1 \qS \qA_2 \qS^H \nonumber  \\
 &+ \qH_0^H \qB_2 \bqH_2\bqH_1 \qS \qA_2 \qS^H  + \bqH_1^H \bqH_2^H \qB_2 \qH_0 \qS \qA_2 \qS^H  \nonumber  \\
 &+ \qH_0^H \qB_2 \bqH_2 \qR_1 \bqH_2^H \qB_2 \bqH_2 \qD_1^{-1} \bqH_1 \qS \qA_2 \qC_1 \qA_2 \qS^H
  + \qH_0^H \qB_2 \bqH_2 \qR_1 \qT_2 \qD_1^{-1} \bqH_1 \qS \qA_2 \qC_1 \tqC_2 \qS^H   \nonumber  \\
 &+ \bqH_1^H \bqH_2^H \qB_2 \bqH_2 \qR_1\bqH_2^H \qB_2  \qD_1^{-1} \qH_0 \qS \qA_2 \qC_1 \qA_2 \qS^H
  + \bqH_1^H \qT_2 \qR_1\bqH_2^H \qB_2 \qD_1^{-1} \qH_0 \qS \qA_2 \qC_1 \tqC_2 \qS^H \nonumber  \\
 &+ \bqH_1^H \bqH_2^H \qB_2 \bqH_2 \bqH_1 \qS \qA_2 \qS^H  +  \bqH_1^H \qT_2 \bqH_1 \qS \tqC_2 \qS^H  \nonumber  \\
 &+ \bqH_1^H \bqH_2^H \qB_2 \bqH_2 \qR_1 \bqH_2^H \qB_2 \bqH_2    \qD_1^{-1} \bqH_1 \qS \qA_2 \qC_1 \qA_2 \qS^H  \nonumber  \\
 &+ \bqH_1^H \bqH_2^H \qB_2 \bqH_2 \qR_1 \qT_2  \qD_1^{-1} \bqH_1 \qS \qA_2 \qC_1 \tqC_2 \qS^H  \nonumber  \\
 &+ \bqH_1^H \qT_2 \qR_1 \bqH_2^H \qB_2 \bqH_2  \qD_1^{-1} \bqH_1  \qS \qA_2 \qC_1 \tqC_2 \qS^H
  + \bqH_1^H \qT_2 \qR_1 \qT_2 \qD_1^{-1} \bqH_1  \qS \tqC_2 \qC_1 \tqC_2 \qS^H,\\
 \qD_1=& \qI-(\qA_2 \otimes \bqH_2^H \qB_2 \bqH_2 + \tqC_2 \otimes \qT_2)(\qC_1 \otimes \qR_1).
\end{align}

\subsubsection{Taking Expectation over $\qH_0$}
Then, taking the expectation with respect to $\qH_0$, we further obtian
\begin{align}
 &\Ex_{\{\qS,\qH_0,\qH_1\}} \left\{e^{\tr\left( \qA_2 \qS^H (\qH_0^H+\qH_1^H \bqH_2^H) \qB_2 (\qH_0+\bqH_2 \qH_1) \qS \right)+ \tr \left(\tqC_2 \qS^H \qH_1^H \qT_2 \qH_1 \qS \right)} \right\} \nonumber  \\
 =&\Ex_{\qS}  \left\{ \int e^{\calG_0^{(r)}(\qC_0,\qC_1)} d \mu_0^{(r)}(\qC_0,\qC_1) \right\} + \calO(1), \label{eq:ETr44}
\end{align}
where
\begin{align}
 \calG_0^{(r)}(\qC_0,\qC_1) = & \tr \left( \qA_0 \qS^H \qB_0 \qS \right) - \log\det \left(\qI - \left(\qA_2 \otimes \bqH_2^H \qB_2 \bqH_2 +  \tqC_2 \otimes \qT_2 \right)\left( \qC_1 \otimes \qR_1 \right)\right)  \nonumber  \\
 & - \log\det \left(\qI - \qD_2( \qC_0 \otimes \qR_0)\right)\label{eq:ETr3}\\
 \mu_0^{(r)}(\qC_0,\qC_1) = &\prod_{\alpha,\beta=0}^r \delta\left( {\qs^{(\alpha)}}^H \qT_0 \qs^{(\beta)} - N [\qC_0]_{\alpha,\beta} \right)\delta\left( {\qs^{(\alpha)}}^H \qT_1 \qs^{(\beta)} - N [\qC_1]_{\alpha,\beta} \right)\label{eq:mu2}
\end{align}
with
\begin{align}\label{eq:A0B0}
\qA_0 \otimes \qB_0=& (\qI \otimes \bqH_1^H) (\qA_2 \otimes \bqH_2^H \qB_2 \bqH_2 + \tqC_2 \otimes \qT_2) \qD_1^{-1} (\qI \otimes \bqH_1)  \nonumber  \\
 &+ [ (\qI\otimes\bqH_0^H) \qD_2 +(\qI \otimes \bqH_1^H) \qD_1^{-1} (\qA_2\otimes \bqH_2^H\qB_2)] (\qC_0\otimes\qR_0)\qD_0^{-1}\nonumber  \\
 &~~\times [ (\qA_2\otimes \qB_2 \bqH_2) \qD_1^{-1} (\qI \otimes \bqH_1) +  \qD_2 (\qI\otimes\bqH_0)] \nonumber  \\
 &+  (\qI\otimes\bqH_0^H) \qD_2 (\qI\otimes\bqH_0) +(\qI\otimes\bqH_1^H) \qD_1^{-1} (\qA_2\otimes\bqH_2^H \qB_2) (\qI\otimes\bqH_0) \nonumber  \\
 &+ (\qI\otimes\bqH_0^H) (\qA_2\otimes\qB_2 \bqH_2)\qD_1^{-1}(\qI\otimes\bqH_1),  \\
 \qD_0=& \qI - \qD_2 ( \qC_0 \otimes \qR_0), \\
 \qD_2=& (\qA_2 \otimes \qB_2)+(\qA_2 \otimes \qB_2\bqH_2) (\qC_1\otimes\qR_1) \qD_1^{-1} (\qA_2 \otimes \bqH_2^H \qB_2).
\end{align}


Similar to \eqref{eq:muC_2}, we also have
\begin{align}\label{eq:mu2C_2}
 \mu_0^{(r)}(\qC_0,\qC_1) = e^{-\calR_0^{(r)}(\qC_0,\qC_1)+ \calO(1)},
\end{align}
where $\calR_0^{(r)}(\qC_0,\qC_1)$ is the rate measure of $\mu_0^{(r)}(\qC_0,\qC_1)$ and is given by
\begin{align}\label{eq:RC2}
\calR_0^{(r)}(\qC_0,\qC_1) = & \max_{\tqC_0} \left\{ N \tr(\tqC_0\qC_0) - \log \Ex_{\qS} \left\{e^{ \tr \left(\tqC_0 \qS^H \qT_0 \qS \right)} \right\} \right\} \nonumber  \\
 &+\max_{\tqC_1} \left\{ N \tr(\tqC_1\qC_1) - \log \Ex_{\qS} \left\{e^{ \tr \left(\tqC_1 \qS^H \qT_1 \qS \right)} \right\} \right\},
\end{align}
with $\tqC_0 \in \bbC^{(r+1)\times(r+1)}$ and $\tqC_1 \in \bbC^{(r+1)\times(r+1)}$ are two symmetric matrices. Therefore, as $N\rightarrow\infty$, the integration over $\qC_0$ and $\qC_1$ can be performed via the saddle point method in \eqref{eq:ETr44}, we get
\begin{align}\label{eq:ETrmaxC2}
 &-\lim_{N\rightarrow\infty} \log \Ex_{\qS} \left\{e^{\tr\left( \qA_2 \qS^H \qH_1^H \bqH_2^H \qB_2 \bqH_2 \qH_1 \qS \right) + \tr \left(\tqC_2 \qS^H \qH_1^H \qT_2 \qH_1 \qS \right)} \right\} \nonumber  \\
 = & \min_{\qC_0,\qC_1} \left\{ - \calG_0^{(r)}(\qC_0,\qC_1) + \calR_0^{(r)}(\qC_0,\qC_1) \right\}.
\end{align}

By interchanging the two limits $N \rightarrow \infty$ and $r \rightarrow 0$ in \eqref{eq:F} and combining \eqref{eq:ETr}, \eqref{eq:ETrmaxC1}, and \eqref{eq:ETrmaxC2}, we obtain
\begin{align}
 F = & \lim_{r\rightarrow 0} \frac{\partial}{\partial r} \min_{\qC_2} \left\{ - \calG_2^{(r)}(\qC_2) + \calR_2^{(r)}(\qC_2) \right\} \nonumber  \\
   = & \lim_{r \rightarrow 0} \frac{\partial}{\partial r} \min_{\qC_2}\max_{\tqC_2} \bigg\{ K \log\left(1+ r\right) + \log\det\left(\qI + \qC_2 \qSigma \otimes \qR_2 \right) \nonumber  \\
       &+  L \tr(\tqC_2\qC_2) - \log \Ex_{\qS} \left\{e^{ \tr\left( \qA_2 \qS^H \qH_1^H \bqH_2^H \qB_2 \bqH_2 \qH_1 \qS \right) + \tr \left(\tqC_2 \qS^H \qH_1^H \qT_2 \qH_1 \qS \right)} \right\} \bigg\}  \nonumber  \\
   = & \lim_{r \rightarrow 0} \frac{\partial}{\partial r} \min_{\qC_0,\qC_1,\qC_2}\max_{\tqC_0,\tqC_1,\tqC_2} \bigg\{ K \log\left(1+ r\right) + \log\det\left(\qI + \qC_2 \qSigma \otimes \qR_2 \right) \nonumber  \\
       & + \log\det \left(\qI - \left(\qA_2 \otimes \bqH_2^H \qB_2 \bqH_2 +  \tqC_2 \otimes \qT_2 \right)\left( \qC_1 \otimes \qR_1 \right)\right) \nonumber  \\
       & + \log\det \left(\qI - \qD_2( \qC_0 \otimes \qR_0 )\right)  \nonumber  \\
       & + \log\det \left(\qI - \qA_0 \otimes \qB_0 - \tqC_0 \otimes \qT_0 - \tqC_1 \otimes \qT_1 \right)  \nonumber  \\
       &+ N \tr(\tqC_0\qC_0) + N \tr(\tqC_1\qC_1) + L \tr(\tqC_2\qC_2) \bigg\}.  \label{eq:F2}
\end{align}

\subsubsection{Replica Symmetry}
In order to obtain the saddle-points in \eqref{eq:F2}, we assume the known replica symmetry (RS), which the saddle-points are not affected by the dependence on the replica indices, rather than search over all possible $\qC_i$ and $\tqC_i$ for $i=0,1,2$. Therefore, we can write $\qC_i$ and $\tqC_i$ for $i=0,1,2$ as \cite{02TIT-Tanaka,07TIT-Wen}
\begin{equation}\label{eq:CandTC}
 \left\{
\begin{aligned}
    \qC_i & = (a_i-b_i)\qI_{(r+1)} + b_i \qone_{(r+1)} \qone_{(r+1)}^T,\\
    \tqC_i & = (\ta_i-\tb_i)\qI_{(r+1)} + \tb_i \qone_{(r+1)} \qone_{(r+1)}^T.
\end{aligned}
\right.
\end{equation}

With the RS in \eqref{eq:CandTC}, using Lemma \ref{Lemma:Eigen-deco simple}, we obtain that the eigenvalues of the matrix $\qC_i \qSigma$, $\qC_i$, and $\tqC_i$ for $i=0,1,2$ are given, respectively, by
\begin{subequations}\label{eq:Eigenvalue}
\begin{align}
 &\lambda_1(\qC_i \qSigma) = 0, ~~\lambda_a(\qC_i \qSigma) =  \frac{a_i-b_i}{\sigma^2},~~ a = 2, 3, \dots, r+1,  \\
 &\lambda_1(\qC_i) = a_i+rb_i, ~~\lambda_a(\qC_i) = a_i-b_i,~~ a = 2, 3, \dots, r+1, \\
 &\lambda_1(\tqC_i) = \ta_i+r\tb_i, ~~\lambda_a(\tqC_i) = \ta_i-\tb_i,~~ a = 2, 3, \dots, r+1.
\end{align}
\end{subequations}

Substituting \eqref{eq:Eigenvalue} into \eqref{eq:F2}, the second term of \eqref{eq:F2} can be rewritten as
\begin{align}\label{eq:second term F2}
 \log\det\left(\qI + \qC_2 \qSigma \otimes \qR_2 \right) = &r \log\det\left(\qI + \frac{a_2-b_2}{\sigma^2} \qR_2 \right).
\end{align}

Similarly, we get that the third, fourth, and fifth terms of \eqref{eq:F2} can be recast, respectively, as
\begin{align}\label{eq:third term F2}
 &\log\det \left(\qI - \left(\qA_2 \otimes \bqH_2^H \qB_2 \bqH_2 +  \tqC_2 \otimes \qT_2 \right)\left( \qC_1 \otimes \qR_1 \right)\right)  \nonumber  \\
 =& r\log\det\left(\qI + \left( \bqH_2^H \left((a_2-b_2)\qR_2+\sigma^2\qI \right)^{-1} \bqH_2 - (\ta_2-\tb_2)\qT_2 \right) (a_1-b_1)\qR_1 \right) \nonumber  \\
 &+ \log\det\left(\qI-(\ta_2+r\tb_2)\qT_2(a_1+rb_1)\qR_1\right),
\end{align}
\begin{align}\label{eq:fourth term F2}
 & \log\det \left(\qI - \qD_2( \qC_0 \otimes \qR_0 )\right) = r\log\det \left( \qI - (a_0-b_0) \qG_2 \qR_0 \right) +\log\det \qI,
\end{align}
and
\begin{align}\label{eq:fifth term F2}
 & \log\det \left(\qI - \qA_0 \otimes \qB_0 - \tqC_0 \otimes \qT_0 - \tqC_1 \otimes \qT_1 \right)  \nonumber  \\
 =& r \log\det \Big(\qI - \bqH_1^H \left( - \bqH_2^H \left( (a_2-b_2) \qR_2+\sigma^2\qI \right)^{-1} \bqH_2 + (\ta_2 - \tb_2) \qT_2 \right) \qG_1^{-1} \bqH_1 - \bqH_0^H \qG_2 \bqH_0 \nonumber  \\
 & +  \bqH_1^H \qG_1^{-1} \bqH_2^H \left( (a_2-b_2)\qR_2 + \sigma^2\qI \right)^{-1} \bqH_0 + \bqH_0^H \left( (a_2-b_2)\qR_2 + \sigma^2\qI \right)^{-1} \bqH_2 \qG_1^{-1} \bqH_1  \nonumber  \\
 & - \left(\bqH_0^H \qG_2 - \bqH_1^H \qG_1^{-1} \bqH_2^H \left( (a_2-b_2) \qR_2 + \sigma^2\qI \right)^{-1} \right) (a_0-b_0) \qR_0 (\qI-(a_0-b_0)\qG_2\qR_0)^{-1}  \nonumber  \\
 & ~~\times \left(\qG_2 \bqH_0 -\left( (a_2-b_2)\qR_2 + \sigma^2\qI \right)^{-1} \bqH_2 \qG_1^{-1} \bqH_1 \right) - (\ta_0-\tb_0) \qT_0 - (\ta_1-\tb_1) \qT_1 \Big) \nonumber  \\
 &+ \log\det \Big(\qI - (\ta_2 + r\tb_2)\bqH_1^H \qT_2 (\qI-(\ta_2 + r\tb_2)(a_1 + rb_1)\qT_2\qR_1)^{-1} \bqH_1  \nonumber  \\
  & ~~ - (\ta_0 + r\tb_0) \qT_0 - (\ta_1 + r\tb_1) \qT_1 \Big),
\end{align}
where
\begin{align}
 \qG_1=& \qI + \left( \bqH_2^H \left( (a_2-b_2) \qR_2+\sigma^2\qI \right)^{-1} \bqH_2 - (\ta_2-\tb_2)\qT_2 \right) (a_1-b_1)\qR_1, \nonumber  \\
 \qG_2=& - \left( (a_2-b_2) \qR_2+\sigma^2\qI \right)^{-1} \nonumber  \\
  &+(a_1-b_1) \left( (a_2-b_2) \qR_2+\sigma^2\qI \right)^{-1} \bqH_2 \qR_1 \qG_1^{-1} \bqH_2^H \left( (a_2-b_2) \qR_2+\sigma^2\qI \right)^{-1}. \nonumber
\end{align}

For the sixth, seventh, and eighth terms of \eqref{eq:F2}, we have
\begin{align}\label{eq:sixth term F2}
  N \tr(\tqC_0\qC_0) = N (r+1) \left(a_0\ta_0 + r b_0\tb_0\right),
\end{align}
\begin{align}\label{eq:seventh term F2}
  N \tr(\tqC_1\qC_1) = N (r+1) \left(a_1\ta_1 + r b_1\tb_1\right),
\end{align}
and
\begin{align}\label{eq:eighth term F2}
  L \tr(\tqC_2\qC_2) = L (r+1) \left(a_2\ta_2 + r b_2\tb_2\right),
\end{align}
respectively.

Substituting \eqref{eq:second term F2}-\eqref{eq:eighth term F2} into \eqref{eq:F2}, taking the derivative with respect to $r$, and equating the partial derivatives $\frac{\partial F}{\partial a_i}$, $\frac{\partial F}{\partial b_i}$, $\frac{\partial F}{\partial \ta_i}$, and $\frac{\partial F}{\partial \tb_i}~(i=0,1,2)$ to zeros, we can get the expressions of $\{a_i,b_i,\tb_i\}$ and $\ta_i=0$ if $r=0$ for $i=0,1,2$, and also have
\begin{align}
 F = & K + \log\det\left(\qI + \frac{a_2-b_2}{\sigma^2} \qR_2 \right)+ \log\det \left( \qI - (a_0-b_0) \qG_2 \qR_0 \right)\nonumber  \\
  & + \log\det\left(\qI + \left(\bqH_2^H \left((a_2-b_2)\qR_2+\sigma^2\qI \right)^{-1} \bqH_2 - (\ta_2-\tb_2)\qT_2 \right) (a_1-b_1)\qR_1 \right) \nonumber  \\
  & + \log\det \Bigg(\qI - \bqH_1^H \left( - \bqH_2^H \left( (a_2-b_2) \qR_2+\sigma^2\qI \right)^{-1} \bqH_2 + (\ta_2 - \tb_2) \qT_2 \right) \qG_1^{-1} \bqH_1 - \bqH_0^H \qG_2 \bqH_0 \nonumber  \\
 & +  \bqH_1^H \qG_1^{-1} \bqH_2^H \left( (a_2-b_2)\qR_2 + \sigma^2\qI \right)^{-1} \bqH_0 + \bqH_0^H \left( (a_2-b_2)\qR_2 + \sigma^2\qI \right)^{-1} \bqH_2 \qG_1^{-1} \bqH_1  \nonumber  \\
 & - \left(\bqH_0^H \qG_2 - \bqH_1^H \qG_1^{-1} \bqH_2^H \left( (a_2-b_2) \qR_2 + \sigma^2\qI \right)^{-1} \right) (a_0-b_0) \qR_0 (\qI-(a_0-b_0)\qG_2\qR_0)^{-1}  \nonumber  \\
 & ~~\times \left(\qG_2 \bqH_0 -\left( (a_2-b_2)\qR_2 + \sigma^2\qI \right)^{-1} \bqH_2 \qG_1^{-1} \bqH_1 \right) - (\ta_0-\tb_0) \qT_0 - (\ta_1-\tb_1) \qT_1 \Bigg) \nonumber  \\
  &  - N \left(a_0- b_0\right)\tb_0 - N \left(a_1- b_1\right)\tb_1 - L \left(a_2- b_2\right)\tb_2.  \label{eq:F3}
\end{align}

Defining $\te_i=\tb_i$ and $e_i=a_i-b_i$ for $i=0,1,2$ and substituting \eqref{eq:F3} into \eqref{eq:conditional mutual information}, we obtain this proposition.

\subsection{Taking the derivative of $\bR\left(\qQ,\qTheta\right)$ with respect to $\vartheta_l = e^{j\theta_l}$}\label{Appendix: Taking the derivative}

\begin{align}
\frac{\partial \bR}{\partial \vartheta_l} &= e_1 \tr\qF_1^{-1} \qF_3 \qR_1 + e_0 \tr \left( \qI_K + e_0 \qF_2 \qR_0 \right)^{-1} \qF_4 \qR_0 \nonumber  \\
  &+ \tr \qQ \left(\qI_N + \qF \qQ \right)^{-1} \Big(- e_1 \bqH_1^H \qF_1^{-1} \qF_3 \qR_1 \qF_1^{-1} \qTheta^H (\bqH_2^H \qPhi_2^{-1} \bqH_2 + \te_2\qT_2) \qTheta \bqH_1 + \bqH_1^H \qF_1^{-1} \qF_3 \bqH_1  \nonumber  \\
  &~~+ \bqH_0^H \qF_4 \bqH_0 - e_1\bqH_1^H \qF_1^{-1} \qF_3 \qR_1 \qF_1^{-1} \qTheta^H \bqH_2^H \qPhi_2^{-1} \bqH_0 - \vartheta_l^{-2}  \bqH_1^H \qF_1^{-1} \qE_{ll} \bqH_2^H \qPhi_2^{-1} \bqH_0  \nonumber  \\
     &~~- e_1 \bqH_0^H \qPhi_2^{-1} \bqH_2 \qTheta \qF_1^{-H} \qR_1 \qF_3 \qF_1^{-H} \bqH_1 + \bqH_0^H \qPhi_2^{-1} \bqH_2 \qE_{ll} \qF_1^{-H} \bqH_1 \nonumber  \\
     &~~ - \left( \bqH_0^H \qF_4 - e_1 \bqH_1^H \qF_1^{-1} \qF_3 \qR_1 \qF_1^{-1} \qTheta^H \bqH_2^H \qPhi_2^{-1} - \vartheta_l^{-2} \bqH_1^H \qF_1^{-1} \qE_{ll} \bqH_2^H \qPhi_2^{-1} \right) \nonumber  \\
       &~~~~\times e_0 \qR_0 (\qI+e_0\qF_2\qR_0)^{-1} \left(\qF_2 \bqH_0 + \qPhi_2^{-1} \bqH_2\qTheta \qF_1^{-H} \bqH_1 \right) \nonumber  \\
     &~~ - \left( \bqH_0^H \qF_2 + \bqH_1^H \qF_1^{-1} \qTheta^H \bqH_2^H \qPhi_2^{-1} \right) e_0 \qR_0 (\qI+e_0\qF_2\qR_0)^{-1}  \nonumber \\
       &~~~~\times \left(\qF_4 \bqH_0 - e_1 \qPhi_2^{-1} \bqH_2 \qTheta \qF_1^{-H} \qR_1 \qF_3 \qF_1^{-H} \bqH_1 + \qPhi_2^{-1} \bqH_2 \qE_{ll} \qF_1^{-H} \bqH_1 \right)  \nonumber  \\
     &~~+ e_0^2 \left( \bqH_0^H \qF_2 + \bqH_1^H \qF_1^{-1} \qTheta^H \bqH_2^H \qPhi_2^{-1} \right) \qR_0 (\qI+e_0\qF_2\qR_0)^{-1} \qF_4 \nonumber  \\
     &~~~~\times \qR_0 (\qI+e_0\qF_2\qR_0)^{-1} \left(\qF_2 \bqH_0 + \qPhi_2^{-1} \bqH_2\qTheta \qF_1^{-H} \bqH_1 \right) \Big), \label{eq:derivative barR}
\end{align}
where $\qE_{ll}$ denotes the all-zero $L \times L$ matrix except that the entry of the $l$-th row and $l$-th column is $1$, $\{ \qF, \qPhi_2, \qF_1, \qF_2\}$ are shown in \eqref{eq:qF123}, and $\{ \qF_3, \qF_4\}$ are given by
\begin{subequations}\label{eq:qF34}
\begin{align}
 \qF_3 = & \qTheta^H (\bqH_2^H \qPhi_2^{-1} \bqH_2 + \te_2\qT_2) \qE_{ll} - \vartheta_l^{-2} \qE_{ll} (\bqH_2^H \qPhi_2^{-1} \bqH_2 + \te_2\qT_2) \qTheta, \\
 \qF_4 = & e_1 \qPhi_2^{-1} \bqH_2 \left(\vartheta_l^{-2} \qTheta \qR_1 \qF_1^{-1} \qE_{ll} - \qE_{ll} \qR_1 \qF_1^{-1} \qTheta^H + e_1\qTheta \qR_1 \qF_1^{-1} \qF_3 \qR_1\qF_1^{-1} \qTheta^H \right) \bqH_2^H \qPhi_2^{-1}.
\end{align}
\end{subequations}

For the special case with the Rician factors of three channels $\kappa_t = \infty$ for $t =0,1,2$, i.e., the perfect CSIT is available at BS, we have
\begin{align}
\frac{\partial \bR}{\partial \vartheta_l} &= \frac{1}{\sigma^2} \tr \qQ \left(\qI_N + \qF \qQ \right)^{-1} \left( (\bqH_0 + \bqH_2 \qTheta  \bqH_1)^H \bqH_2 \qE_{ll} \bqH_1 - \vartheta_l^{-2} \bqH_1^H \qE_{ll} \bqH_2^H  (\bqH_0 +  \bqH_2 \qTheta \bqH_1 )\right).
\end{align}

When the direct link $\qH_0 = 0$, we have
\begin{align}
\frac{\partial \bR}{\partial \vartheta_l} &= \tr \qQ \left(\qI_N + \qF \qQ \right)^{-1} \left(- e_1 \bqH_1^H \qF_1^{-1} \qF_3 \qR_1 \qF_1^{-1} \qTheta^H (\bqH_2^H \qPhi_2^{-1} \bqH_2 + \te_2\qT_2) \qTheta \bqH_1 + \bqH_1^H \qF_1^{-1} \qF_3 \bqH_1 \right)\nonumber  \\
  &+ e_1 \tr\qF_1^{-1} \qF_3 \qR_1, \label{eq:derivative barR}
\end{align}

\subsection{Mathematical Tools}\label{Appendix: Mathematical Tools}

In this appendix, we provide some mathematical tools needed in this paper.

\begin{lemma}\label{Lemma:GIHST}
\cite[Lemma 1]{16TSP-Wen} (Gaussian Integral and Hubbard-Stratonovich Transformation) Let $\qz$ and $\qb$ be $N$-dimensional real vectors, and $\qA$ be an $N\times N$ positive definite matrix. Then,
\begin{align}
 \frac{1}{\pi^N} \int d\qz e^{-\qz^H\qA\qz + \qz^H\qb + \qb^H\qz} = \frac{1}{\det(\qA)} e^{\qb^H \qA^{-1}\qb}.
\end{align}
\end{lemma}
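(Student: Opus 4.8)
The plan is to prove this identity by completing the square in the exponent and reducing to a product of one-dimensional complex Gaussian integrals; here I read $\qz$ as ranging over $\bbC^{N}$ with $d\qz$ the Lebesgue measure, which is the interpretation consistent with the $1/\pi^{N}$ normalization and with $\det(\qA)$ (rather than $\sqrt{\det(\qA)}$) appearing on the right.

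Since $\qA$ is positive definite it is Hermitian and invertible, with $\qA^{-H}=\qA^{-1}$. First I would rewrite the exponent as
\[
 -\qz^H\qA\qz + \qz^H\qb + \qb^H\qz = -\left(\qz-\qA^{-1}\qb\right)^{H} \qA \left(\qz-\qA^{-1}\qb\right) + \qb^H\qA^{-1}\qb ,
\]
which is verified by expanding the right-hand side and using $\qA^{-H}=\qA^{-1}$. Pulling the $\qz$-independent factor $e^{\qb^H\qA^{-1}\qb}$ out of the integral, it remains to show $\frac{1}{\pi^N}\int d\qz\, e^{-(\qz-\qA^{-1}\qb)^{H}\qA(\qz-\qA^{-1}\qb)} = 1/\det(\qA)$.

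Next I would translate the integration variable by setting $\qw := \qz-\qA^{-1}\qb$. When $\qb$ has a nonzero imaginary part this is a shift of the integration contour in $\bbC^{N}$; because the integrand is entire in each component of $\qz$ and decays like a Gaussian along every real direction, Cauchy's theorem lets me move the contour back onto $\bbC^{N}$ without changing the value, leaving $\frac{1}{\pi^N}\int d\qw\, e^{-\qw^{H}\qA\qw}$. Then I would diagonalize $\qA = \qU\qLambda\qU^{H}$ with $\qU$ unitary and $\qLambda = \diag(\lambda_1,\ldots,\lambda_N)$, $\lambda_i>0$, and substitute $\qw = \qU\qv$: a unitary change of variables has unit real Jacobian and preserves Lebesgue measure on $\bbC^{N}$, so the integral factorizes as $\prod_{i=1}^{N} \frac{1}{\pi}\int_{\bbC} e^{-\lambda_i|v_i|^2}\, dv_i$. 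Passing to polar coordinates in each $v_i$ gives $\frac{1}{\pi}\int_{\bbC}e^{-\lambda|v|^2}\, dv = 1/\lambda$, hence the product equals $\prod_{i=1}^{N} 1/\lambda_i = 1/\det(\qA)$; multiplying back the factor $e^{\qb^H\qA^{-1}\qb}$ recovers the stated formula.

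The only genuinely delicate step is the contour shift for complex $\qb$, which needs the analyticity-plus-decay argument above; everything else is elementary linear algebra and the one-dimensional Gaussian integral. Since the identity is quoted verbatim from \cite{16TSP-Wen}, in the paper it is enough to cite it, the argument sketched here being the standard textbook justification.
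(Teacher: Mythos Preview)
Your proof is correct and follows the standard route (complete the square, shift the contour, diagonalize, and factor into one-dimensional complex Gaussians). The paper itself does not supply a proof of this lemma at all; it merely quotes the statement from \cite{16TSP-Wen} as a tool in Appendix~C, exactly as you anticipated in your closing remark.
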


\begin{lemma}\label{Lemma:Eigen-deco simple}
\cite{12TWC-Wen} For a matrix $\qA = (a-b)\qI_{(r+1)} + b \qone_{(r+1)} \qone_{(r+1)}^T$, the eigen-decomposition of $\qA$ is given by
\begin{align}
  \qA = \qU\diag(a+rb,a-b,\ldots,a-b)\qU^H,
\end{align}
where $[\qU]_{n,m} = \frac{1}{\sqrt{r+1}} e^{-j\frac{2\pi}{r+1}(n-1)(m-1)}$ is the discrete Fourier transform matrix.
\end{lemma}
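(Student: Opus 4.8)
The plan is to recognize $\qA = (a-b)\qI_{(r+1)} + b\,\qone_{(r+1)}\qone_{(r+1)}^T$ as a rank-one perturbation of a scaled identity and to diagonalize it by inspection. First I would check that the all-ones vector is an eigenvector: since $\qone_{(r+1)}^T\qone_{(r+1)} = r+1$, one has $\qA\qone_{(r+1)} = (a-b)\qone_{(r+1)} + b(r+1)\qone_{(r+1)} = (a+rb)\qone_{(r+1)}$. Next, for any $\qv$ with $\qone_{(r+1)}^T\qv = 0$ one gets $\qone_{(r+1)}\qone_{(r+1)}^T\qv = \qzero$, hence $\qA\qv = (a-b)\qv$; thus the $r$-dimensional subspace $\{\qv : \qone_{(r+1)}^T\qv = 0\}$ is the eigenspace for the eigenvalue $a-b$, and together with the one-dimensional span of $\qone_{(r+1)}$ this accounts for the entire spectrum of $\qA$, namely $a+rb$ once and $a-b$ with multiplicity $r$.

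It then remains only to exhibit a concrete orthonormal eigenbasis, which is where the discrete Fourier transform matrix enters. I would recall that $\qU$ with $[\qU]_{n,m} = \frac{1}{\sqrt{r+1}} e^{-j\frac{2\pi}{r+1}(n-1)(m-1)}$ is unitary, so its columns $\qu_1,\ldots,\qu_{r+1}$ form an orthonormal set. The first column has every entry equal to $\frac{1}{\sqrt{r+1}}$, i.e., $\qu_1 = \frac{1}{\sqrt{r+1}}\qone_{(r+1)}$, so $\qA\qu_1 = (a+rb)\qu_1$. For $m \geq 2$ the column sum is $\qone_{(r+1)}^T\qu_m = \frac{1}{\sqrt{r+1}}\sum_{k=0}^{r} e^{-j\frac{2\pi}{r+1}k(m-1)} = 0$ by the root-of-unity identity, hence $\qu_m$ lies in the zero-sum subspace and $\qA\qu_m = (a-b)\qu_m$. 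Assembling these relations column by column gives $\qA\qU = \qU\,\diag(a+rb, a-b, \ldots, a-b)$, and right-multiplying by $\qU^H = \qU^{-1}$ yields $\qA = \qU\,\diag(a+rb, a-b, \ldots, a-b)\,\qU^H$, as claimed.

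There is in fact no real obstacle here: the statement is elementary, being a special case of the fact that a symmetric circulant matrix is diagonalized by the DFT. The single point deserving a line of care is the verification that the non-leading DFT columns belong to the zero-sum subspace of $\qone_{(r+1)}$, which is exactly the finite geometric-series identity invoked above. Since the lemma is attributed to \cite{12TWC-Wen}, one may alternatively simply cite it; but the self-contained argument is short enough to state in full.
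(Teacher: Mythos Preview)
Your argument is correct and complete. The paper does not actually prove this lemma: it merely states it in Appendix~C as a cited tool from \cite{12TWC-Wen}, so there is no ``paper's own proof'' to compare against. Your self-contained verification---identifying $\qone_{(r+1)}$ as the eigenvector for $a+rb$, using the geometric-series identity to show the remaining DFT columns lie in the zero-sum subspace and hence are eigenvectors for $a-b$, and invoking unitarity of $\qU$---is exactly the standard elementary proof and is entirely appropriate here.
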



\end{document}